\documentclass[a4paper]{article}
\usepackage{amssymb}
\usepackage{amsmath}
\usepackage{amsthm}
\usepackage{amsfonts}
\usepackage{graphicx,epsfig,graphics}
\usepackage[left=2cm,top=3.5cm,right=2cm,nohead,foot=1cm]{geometry}
\usepackage{subfigure}
\usepackage{color}
\usepackage{hyperref}
\usepackage[normalem]{ulem}
\usepackage{authblk}

\usepackage{mathtools}

\newcommand{\ind}{1\hspace{-2.1mm}{1}}

\newcommand{\PP}{\mathbb{P}}
\newcommand{\Q}{\mathbb{Q}}
\newcommand{\D}{\mathrm{d}}
\newcommand{\Oo}{\mathcal{O}}
\newcommand{\Nn}{\mathcal{N}}
\newcommand{\E}{\mathrm{e}}
\newcommand{\BS}{\mathrm{BS}}

\newcommand{\pp}{\mathfrak{p}}
\newcommand{\qq}{\mathfrak{q}}
\newtheorem{theorem}{Theorem}

\newtheorem{corollary}[theorem]{Corollary}

\newtheorem{lemma}[theorem]{Lemma}

\newtheorem{proposition}[theorem]{Proposition}

\theoremstyle{definition}

\newtheorem{remark}[theorem]{Remark}
\newtheorem{example}[theorem]{Example}

\renewenvironment{proof}[1][Proof]{\noindent\textbf{#1.} }{\
\rule{0.5em}{0.5em}}
\numberwithin{equation}{section}
\numberwithin{theorem}{section}

\newcommand{\be}{\begin{equation}}
\newcommand{\ee}{\end{equation}}
\newcommand{\Prob}{\mathbb{P}}
\newcommand{\R}{\mathbb{R}}
\newcommand{\esp}{\mathbb{E}}
\newcommand{\eps}{\varepsilon}
\newcommand{\mino}{<}
\newcommand{\maj}{>}
\newcommand{\til}{~}

\graphicspath{ {./pics3/} }

\begin{document}

\title{Shapes of implied volatility with positive mass at zero\thanks{We thank Valdo Durrlemann, Archil Gulisashvili, Pierre Henry-Labord\`ere, Aleksandar Mijatovi\'c and Mike Tehranchi for stimulating discussions. 
We thank an anonymous referee for his/her important comments on the comparison of our results with that of Gulisashvili~\cite{ArchilAtom}.
SDM and CH are thankful to Nizar Touzi for his interest during the first phase of this work.
SDM and AJ acknowledge funding from the Imperial College Workshop Support Grant and the London Mathematical Society for the `Workshop on Large deviations and asymptotic methods in finance' (April 2013).
SDM and CH acknowledge funding from the research programs \emph{Chaire Risques Financiers}, \emph{Chaire March\'es en mutation} and \emph{Chaire Finance et d\'eveloppement durable}.
AJ acknowledges financial support from the EPSRC First Grant EP/M008436/1.
Corresponding author: demarco@cmap.polytechnique.fr
\newline \indent \emph{Key words and phrases}: Atomic distribution, heavy-tailed distribution, Implied Volatility, smile asymptotics, absorption at zero, CEV model.
\newline \indent \emph{2010 Mathematics Subject Classification}: AMS 91G20, 65C50.}}
\author[1]{S. De Marco}
\author[2]{C. Hillairet}
\author[2]{A. Jacquier}
\date{\today}
\affil[1]{CMAP, Ecole Polytechnique}
\affil[2]{ENSAE Paris Tech}
\affil[3]{Imperial College London}

\maketitle

\begin{abstract}
We study the shapes of the implied volatility when the underlying distribution has an atom at zero
and analyse the impact of a mass at zero on at-the-money implied volatility and the overall level of the smile.
We further show that the behaviour at small strikes is uniquely determined by the mass of the atom up to high asymptotic order, under mild assumptions on the remaining distribution on the positive real line.
We investigate the structural difference with the no-mass-at-zero case, 
showing how one can--theoretically--distinguish between mass at the origin and a heavy-left-tailed distribution.
We numerically test our model-free results in
stochastic models with absorption at the boundary, such as the CEV process, and in jump-to-default models.
Note that while Lee's moment formula~\cite{Lee} tells that implied variance is at most asymptotically linear in log-strike, other celebrated results for exact smile asymptotics such as~\cite{BenFr,GulForm} do not apply in this setting--essentially due to the breakdown of Put-Call duality.
\end{abstract}


\section{Introduction} 

Stochastic models are used extensively to price options and calibrate market data.
In practice, such data is often quoted, not in terms of option prices, but in terms of implied volatilities.
However, apart from the Black-Scholes model where the implied volatility is constant, 
no closed-form formula is available for most models.
Over the past decade or so, many authors have worked out approximations of this implied volatility, 
either in a model-free setting or for some specific models; 
these approximations are usually {only} valid in {restricted} regions, such as small and large maturities, or extreme strikes.
The latter have proved to be useful in order to extrapolate observed (and calibrated) data in an arbitrage-free way.
The celebrated moment formula by Lee~\cite{Lee} was a ground-breaking model-independent result in this direction; subsequent advances were made by Benaim and Friz~\cite{BenFr} and by Gulisashvili~\cite{GulForm}.
Denote $P(K)=\esp(K-S_T)^+$ the price of a Put option with strike~$K$ and maturity~$T$, 
where~$S$ is a positive random variable defined on some probability space with measure~$\Prob$.
Gulisashvili showed that the behaviour of the implied volatility $I(K)$ at small strikes is related to this Put price 
via the asymptotic formula~\cite[Corollary 5.12]{GulForm}
\be \label{e:generalTailWing}
I(K) = \sqrt{\frac{|\log K|}T} \sqrt{\psi \Bigl( \frac{\log P(K)}{\log K} - 1 \Bigr)}
+ \Oo \left( \Bigl(\log\frac{K}{P(K)}\Bigr)^{-1/2} \log \log \frac{K}{P(K)} \right),
\qquad \mbox{as } K \downarrow 0,
\ee
 where the continuous function $\psi: [0,\infty] \to [0,2]$ is defined by
\be \label{e:psiMomentFormula}
\psi(z)\equiv 2-4\left(\sqrt{z(z+1)}-z\right), 
\qquad \psi(\infty) = 0.
\ee
A similar formula, expressed in terms of the Call price $C(K)=\esp(S_T-K)^+$, holds as $K$ tends to infinity.
The expansion~\eqref{e:generalTailWing} is valid for every Put price function $P$ such that $P(K)>0$ for all $K > 0$,
which is equivalent to $\Prob(S_T < K) > 0$ for all $K > 0$.\footnote{If $\Prob(S_T < \overline{K})=0$ 
for some $0 < \overline{K} < S_0$, then $P(K) = 0$ for all $K \le \overline{K}$.
According to the definition of the implied volatility in~\cite{GulForm}, $I(K)$ is not defined for
such strikes; 
according to our extended definition~\eqref{e:implVolDef}, $I(K)$ is identically zero for all~$K \le \overline{K}$.} 
The formula~\eqref{e:generalTailWing} is obtained in two steps: first an asymptotic expansion for~$I(K)$ as~$K$ tends to infinity is given in terms of the Call price function; then the  expression as $K$ tends to zero is obtained via the Put-Call duality
$$
P(K) =
\esp\biggl[\frac{S_T}{S_0} \left(K \frac{S_0}{S_T}  - S_0 \right)^+\biggr]
= K \: \esp_{\Q} \biggl[\left(\frac{S_0}{S_T} - \frac{S_0}K \right)^+\biggr],
$$
where $\Q$  is a probability measure absolutely continuous with respect to $\Prob$, defined through its Radon-Nikodym density $\D\Q/\D\Prob \equiv S_T/S_0$.
The Put-Call symmetry above holds if (as implicitly assumed in~\cite{GulForm}) the law of the underlying asset price does not charge zero under~$\Prob$, i.e. if $\Prob(S_T=0)=0$.
The expansion~\eqref{e:generalTailWing}, then, is a priori not justified when $\Prob(S_T=0)>0$.

In certain stochastic models, the asset price is modelled with a stochastic process that accumulates mass at zero in finite time: 
this is for example the case for the Constant Elasticity of Variance (CEV) local volatility diffusion, 
whose fixed-time marginals have a continuous part and an atom at zero under certain parameters configurations (the same phenomenon appears for SABR, the stochastic volatility counterpart of CEV).
In the setting of default modelling, the class of structural models defines the default of a firm as the first time the firm's value hits a given threshold. In~\cite{Coc,Colgold}, the firm's value corresponds to its solvency ratio (logarithm of assets over debt), modelled via an Ornstein-Uhlenbeck process. 
An alternative approach, proposed by  Campi et al.~\cite{CampiSbuelz}, is to refer to the underlying equity process and define the default as the first time the process hits the origin: while the equity value remains deeply related to the firm's asset and debt balance sheet, such a modelling choice is easier to test than structural models, since equity data is more readily available. 
In the setting of~\cite{CampiSbuelz}, the equity process hits the origin either after a jump or in a diffusive way, the continuous-path part of the equity value being modelled by a CEV diffusion with a positive probability of absorption at zero.
Along the same line, we will consider in this paper asset prices that may either jump to zero, or hit zero along a continuous trajectory.

In this work, we study the impact of a mass at zero on at-the-money implied volatility and the overall level of the smile, and determine how the asymptotic behaviour of the implied volatility for small strikes is affected.
Concerning the second point, note that $\Prob(S_T=0)>0$ implies $q^*=0$, 
where $q^* \equiv \sup \{ q \ge 0 : \esp[S_T^{-q}] < \infty \}$ is the negative critical exponent of $S_T$.
Then, Lee's moment formula for small strikes yields, in full generality,
\begin{equation} \label{e:LeeLeft}
\limsup_{K \downarrow 0} \frac{\sqrt T I(K)}{\sqrt{|\log K|}} = \sqrt{\psi(q^*)} = \sqrt{2}.
\end{equation}
Tail-wing type refinements aim at finding conditions under which this $\limsup$ can be strengthened into a genuine limit, yielding the asymptotics $I(K) \sim \sqrt{2|\log K|/T}$ as~$K$ tends to zero: 
Benaim and Friz's result~\cite{BenFr} gives sufficient conditions, but is limited to the case $q^*>0$; Gulisashvili's result~\eqref{e:generalTailWing} applies to the case~$q^*=0$ and~$\Prob(S_T = 0)=0$, and allows to formulate necessary and sufficient conditions, as done in~\cite{GulIJTAF}.
\\
Denote $F(K) \equiv \Prob(S_T \le K)$, $\pp \equiv \PP(S_T=0)$ and $\qq \equiv \Nn^{-1}(\pp)$, 
where $\Nn^{-1}$ is the inverse Gaussian cumulative distribution.
The main results of this paper can be resumed as follows: if $\pp \maj 0$, then
\begin{itemize}
\item[(I)] the at-the-money implied volatility has a non-trivial lower bound: 
$I(S_0) \sqrt{T} \ge 2 \: \mathcal{N}^{-1} \Bigl(\frac12 (1+\pp) \Bigr)$. 
Moreover, for a large class of underlying distributions, the implied volatility smile is monotonically increasing with respect to $\pp$. 
The impact of the mass at zero is stronger for small strikes and asymptotically negligible for large strikes, in the sense that $I^\pp(K) \approx I^0(K) + \pp \Theta(K)$ 
for some function~$\Theta$ such that $\lim\limits_{K \downarrow 0} \Theta(K)=\infty$ 
and $\lim\limits_{K \uparrow \infty} \Theta(K)=0$. (See Theorem~\ref{t:impactOnSmile} for precise statements);
\item[(II)] the implied volatility satisfies $I(K) =
\sqrt{\frac{2 |\log K|}T} +\frac{\qq}{\sqrt T}
+ o(1)$ for small~$K$.
If $F(K) - F(0) = \Oo(|\log K|^{-1/2})$, 
the remainder term $o(1)$ is improved to $\Oo(|\log K|^{-1/2})$.
If moreover $F(K) - F(0) = \Oo(|\log K|^{-3/2})$, then the following asymptotic expansion holds:
\be \label{e:asymAtomIntro}
I(K) = \sqrt{\frac{2|\log K|}T} +\frac{\qq}{\sqrt{T}}
+ \frac{\qq^2 + 2}{2\sqrt{2T|\log K|}}
+ \frac{\qq}{4|\log K|\sqrt{T}} + \Oo\left(|\log K|^{-3/2}\right)
\qquad \text{as } K \downarrow 0.
\ee 
An estimate of the constant in front of the $\Oo\left(|\log K|^{-3/2}\right)$ error term is provided in 
Theorem~\ref{t:IVArchil}.
\end{itemize}

Slightly after the first version of this paper appeared, Gulisashvili~\cite{ArchilAtom} proved an asymptotic expansion for the left wing of the smile when the stock price has mass at the origin.
The main difference with~\eqref{e:asymAtomIntro} is that Gulisashvili's expansion~\cite{ArchilAtom} is written in terms of a non-explicit function of the strike $K$ (defined as the inverse of a given function--we 
refer the reader to Section~\ref{s:asymFormAtom} for precise definitions), 
while~\eqref{e:asymAtomIntro} only contain explicit functions of the strike and of the constant $\Nn^{-1}(\pp)$.
Formula~\eqref{e:asymAtomIntro} therefore allows to read-off the explicit dependence of the implied volatility in terms of the strike at a glance (at least up to a given asymptotic order), potentially allowing to improve parameterisations of the implied volatility smile in such a way to embed mass at zero 
(if one wishes to model default probability in this way, or otherwise to reproduce the left wing behaviour of stochastic models with absorption at zero, as we do in Section~\ref{s:cev} for the CEV model).
In particular,~\eqref{e:asymAtomIntro} highlights the presence in the expansion of a term proportional 
to $|\log K|^{-1}$, which was hidden in Gulisashvili's formulation~\cite{ArchilAtom}.

In order to measure the importance of the assumptions on the cumulative distribution function~$F(\cdot)$, 
let us note here that if the law of the stock price admits a density~$f$ in a right neighbourhood of zero, 
such that $f(K) = \Oo(K^{-a})$ for small~$K$, for some $a<1$, 
then $F(K)-F(0) = \Oo(K^{1-a})$. 
Therefore, the assumption $F(K) - F(0) = \Oo(|\log K|^{-3/2})$ is trivially fulfilled.

We organise the paper as follows: in Section~\ref{s:impactOnSmile}, 
we give the results related to item~(I) above. 
In Section~\ref{s:maximalSlope} we provide the first asymptotic estimates presented in~(II).
Building on the work of Gulisashvili~\cite{ArchilAtom}, we derive the explicit expansion~\eqref{e:asymAtomIntro} in Section~\ref{s:asymFormAtom}, and test this formula on several examples in Section~\ref{s:examples}.

\medskip

\textbf{Notations and preliminaries}.
\emph{Option prices}.
We fix here a maturity~$T\geq 0$, and shall therefore not indicate its dependence for simplicity.
We assume that $S_T$ is a non-negative integrable random variable on some probability space 
$(\Omega, \mathcal F, \Prob)$, with $\esp(S_T) = S_0>0$.
Risk-free interest rates are considered null, and option prices are given by expectations 
under the pricing measure~$\Prob$:
$C(K) \equiv \esp[(S_T-K)^+]$
and
$P(K) \equiv \esp[(K - S_T)^+]$ denote the prices of European Call and Put options with strike~$K\geq 0$ and maturity~$T$.
$C_{\BS}(K; S_0, \sigma)$ and $P_{\BS}(K; S_0, \sigma)$ denote the corresponding Call and Put prices in the Black-Scholes model with volatility parameter~$\sigma$:
\begin{align}
C_{\BS}(K; S_0, \sigma) & \equiv 
\left \{ \begin{array}{ll}
S_0 \Nn\left(d_1(\log(K/S_0)), \sigma\right) - K \Nn\left(d_2(\log(K/S_0)), \sigma\right), & \mbox{if } \sigma > 0
\\
(S_0 - K)^+,  & \mbox{if } \sigma = 0
\end{array}
\right.\label{defCall}
\\
P_{\BS}(K; S_0, \sigma) & \equiv 
\left \{ \begin{array}{ll}
K \Nn\left(-d_2(\log(K/S_0)), \sigma\right) - S_0 \Nn\left(-d_1(\log(K/S_0)), \sigma\right), & \mbox{if } \sigma > 0
\\
(K-S_0)^+,  & \mbox{if } \sigma = 0
\end{array} \label{defPut}
\right.
\end{align}
where
$\label{eq:d12} d_{1,2}(x,\sigma)\equiv \frac{-x}{\sigma \sqrt{T}} \pm \frac{1}{2}\sigma \sqrt{T}$,
and $\Nn$ is the standard Gaussian cumulative distribution function
$\Nn(d) \equiv \int_{-\infty}^d n(z)\D z$, 
with $n(z) \equiv (2\pi)^{-1/2}\E^{-\frac{1}{2}z^2}$.
When the spot price $S_0$ is fixed, it should not generate any confusion to use the same notation~$C_{\BS}$ 
and~$P_{\BS}$ for the (normalised) option prices with log-moneyness $x = \log(K/S_0)$:
$$
S_0 C_{\BS}(x,\sigma) \equiv C_{\BS}(K_x; S_0, \sigma)
\qquad\text{and}\qquad
S_0 P_{\BS}(x,\sigma) \equiv P_{\BS}(K_x; S_0, \sigma),
$$
where
$K_x \equiv S_0 \E^x$.

\emph{Implied volatility}.
The implied volatility\footnote{The implied volatility obviously depends on~$T$ but, 
since the latter is fixed, we shall also drop this dependence in the notation.} 
$I(x)$ is defined as the unique solution in $[0,\infty)$ to the equation
\be \label{e:implVolDef}
S_0 C_{\BS}(x,I(x)) = C(K_x)
\ee
Note that $I(x)$ is a strictly positive real number when~$C(K_x)$ satisfies the strict arbitrage
bounds $(S_0 - K_x)^+ < C(K_x) < S_0$, and it is zero if $C(K_x)= (S_0 - K_x)^+$.
With a slight abuse of notation, and where explicitly stated, we might also denote $I(K)=I(\log(K/S_0))$ the implied volatility as a function of strike.

\emph{Function asymptotics.}
For a function $g$ defined on a punctured neighbourhood of $x_0 \in [-\infty,\infty]$, we write
\begin{itemize}
\item $f=o(g)$ (resp. $f=\Oo(g)$) when $f(x)=g(x) \phi(x)$ in a neighbourhood of~$x_0$, 
for some function~$\phi$ such that $\lim_{x\to x_0}\phi(x) = 0$
(resp. for some $\phi$ bounded around $x_0$);
\item $f(x) = g(x) + \Oo(h(x))$ as $x$ approaches ~$x_0$ when $f-g = \Oo(h)$ in a neighbourhood of~$x_0$;
\item $f(x)\sim g(x)$ around~$x_0$ if~$g$ is non vanishing and $\lim_{x \to x_0}f(x)/g(x)=1$.
\end{itemize}
\emph{Mass at zero and first-order behaviour of the Put price}.
By Fubini's theorem, $P(K) = \int_0^{K} F(y) \D y$, and hence
\be \label{e:putAsympt}
\lim\limits_{K \downarrow 0} \frac{P(K)}K
=
\lim\limits_{K \downarrow 0} \frac1K \int_0^{K} F(y) \D y
= F(0) = \Prob(S_T = 0).
\ee
Finally, we shall denote $\pp = \PP(S_T = 0)$ the mass at zero, and $\qq = \Nn^{-1}(\pp)$.


\section{Overall impact on the smile} \label{s:impactOnSmile}

In this section, we investigate the impact that a mass at zero has on the overall behaviour of the implied volatility.
Intuitively, the impact should be more important on the left part of the smile and less significant on the right part.
Concerning the at-the-money behaviour, we will now show that a mass at zero imposes a lower bound on the level of the implied volatility.
\begin{proposition}
The following lower bound holds for the at-the-money implied volatility:
\be \label{e:ATMlowBound}
I(0) \sqrt{T} \ge 2 \: \mathcal{N}^{-1} \Bigl(\frac12 (1+\pp) \Bigr).
\ee
\end{proposition}
\begin{proof}
When $K=S_0$, the Black-Scholes formula~\eqref{defPut} degenerates to $P_{\BS}(S_0; S_0, \sigma) = S_0\mathcal{N}(\frac{\sigma \sqrt T}2) - S_0 \mathcal{N}(-\frac{\sigma \sqrt T}2) = S_0 \left(2\mathcal{N}(\frac{\sigma \sqrt T}2) -1\right)$, and therefore
\[
\frac{I(0) \sqrt{T}}2 = \mathcal{N}^{-1}\left(\frac12 \left(1+\frac{P(S_0)}{S_0}\right) \right).
\]
Since $P(S_0) = \esp[(S_0-S_T)^+] \ge S_0 \pp$, the proposition follows.
\end{proof}
\medskip

\begin{remark}
When $\pp=0$,~\eqref{e:ATMlowBound} corresponds to the trivial bound $I(0) \ge 0$.
When $\pp \maj 0$, the lower bound in~\eqref{e:ATMlowBound} explains why the implied volatility smiles generated by a distribution with a mass at zero are typically very high, 
as can be seen in all the smiles plotted in Figures~\ref{f:smilesAffine}(d),~\ref{f:smilesBSAtom},~\ref{f:CEVPlotsSmallMass}(a) and~\ref{f:CEVPlotsBigMass}(a).
In order to have a better idea of the magnitude of the lower bound, we can use an approximation of the inverse Gaussian cdf~\cite{AAnormCdf} $\mathcal{N}^{-1}(u) \approx \sqrt{ -\log(1 - (2u-1)^2) / \sqrt{\pi/8}}$, 
which yields $\mathcal{N}^{-1} \Bigl(\frac12 (1+\pp) \Bigr) \approx \sqrt{ -\log(1 - \pp^2) / \sqrt{\pi/8}}$.
A Taylor expansion of the log function around $\pp=0$ gives $2 \mathcal{N}^{-1} \Bigl(\frac12 (1+\pp) \Bigr) \approx \frac1{(\pi/8)^{1/4}} \pp \sqrt{1 + \frac{\pp^2}2} \approx 2.5 \, \pp \sqrt{1 + \frac{\pp^2}2}$ (meaning that, when $\pp=0.2$, the lower bound is around $50\%$).
See Figure~\ref{f:ATMVolLowerBound} for a numerical example.
\end{remark}

\begin{figure}[t]
     \begin{center}
            \includegraphics[scale=0.5]{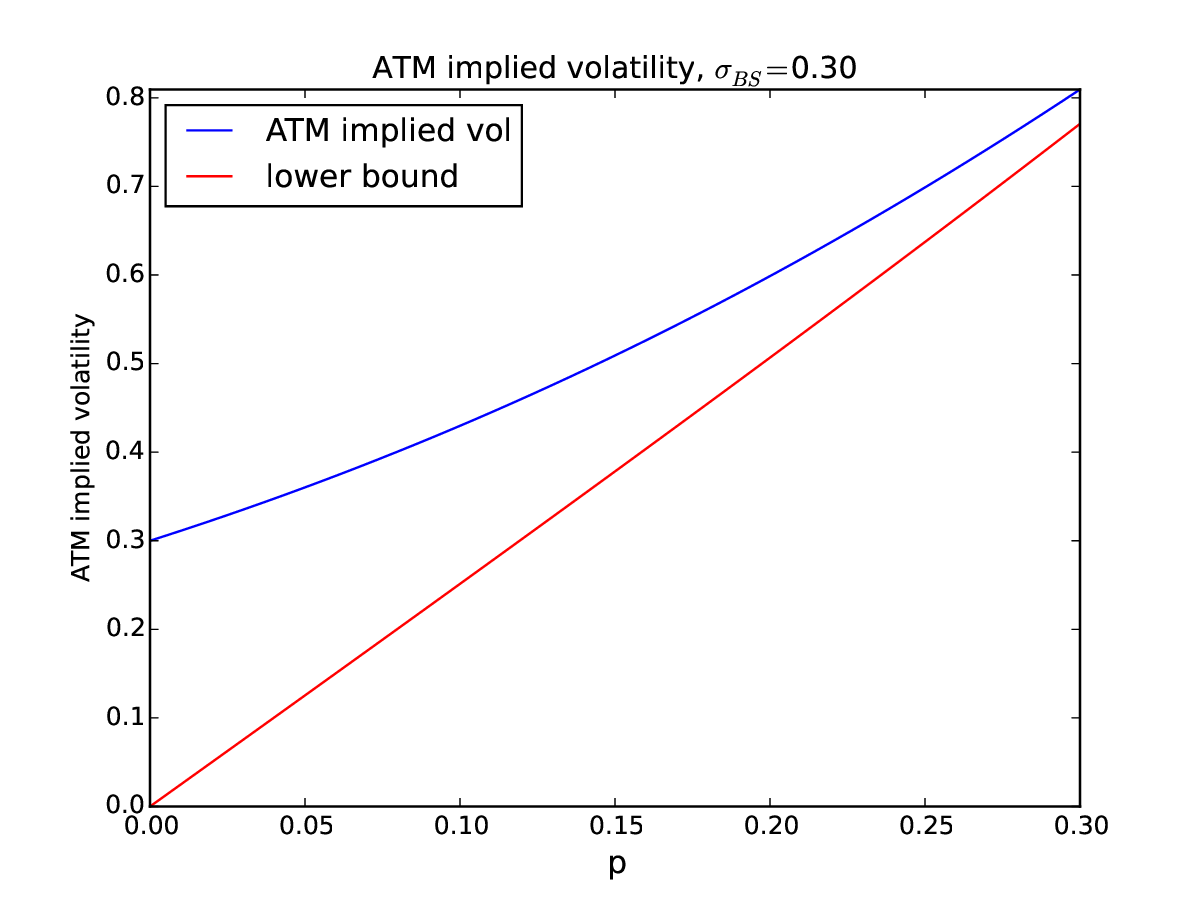}
    \end{center}
\caption{At-the-money implied volatility for the Merton model with jump-to-default (see Section~\ref{s:merton0}). The underlying distribution is $\mu = \pp \delta_0 + (1-\pp) \tilde \mu_{\BS}$, 
where $\tilde \mu_{\BS}$ is a Black-Scholes distribution with volatility parameter $\sigma=0.3$ 
and spot value $1/(1-\pp)$. 
The red line shows the lower bound~\eqref{e:ATMlowBound}.
}
\label{f:ATMVolLowerBound}
\end{figure}

We are now going to show that the introduction of a mass at zero in an asset price model has the effect of lifting the whole smile (that is: simultaneously for all strikes).
We will focus on a certain (large) class of distributions.
Precisely, we consider
the family of random variables $\tilde S = s X$, indexed by their mean value $s = \esp[\tilde S]$, where $X$ is a positive random variable such that $\esp[X]=1$ and $\Prob(X=0)=0$.
This setting covers the case of stochastic volatility models ($d\tilde S_t = \tilde S_t \sigma_t dW_t$, $\tilde S_0 = s$) and exponential L\'evy models.
We denote by~$\tilde{\mu}_s$ the distribution of~$\tilde S$ on $(0,\infty)$. 
This framework provides the reference model, which is then enhanced with a mass at zero by setting
\be \label{e:enhancedModel}
\mu^{(\pp)} = \pp\delta_{0} + (1-\pp) \tilde{\mu}_{\frac{S_0}{1-\pp}}
\ee
for the distribution $\mu^{(\pp)}$ of~$S_T$,
where~$\delta_0$ denotes the Dirac distribution at the origin.
Equation~\eqref{e:enhancedModel} covers the class of models with an independent jump to default, 
presented in Section~\ref{s:j2d}. 
Note that the mean value of~$\tilde{\mu}$ in~\eqref{e:enhancedModel} is imposed by the condition 
$\esp[S_T] = S_0$.
We denote $\tilde F_{s}(K) = \int_0^K \tilde \mu_s(\D y)$ the cdf of $\tilde \mu_s$, 
$P^{(\pp)}(K) = \int (K - y)^+ \mu^{(\pp)}(\D y)$ the price of the Put option with strike~$K$, 
and~$I^{(\pp)}(\cdot)$ the corresponding implied volatility, defined by $S_0 P_{\BS}(x,I^{(\pp)}(x)) = P^{(\pp)}(K_x)$.

\begin{theorem} \label{t:impactOnSmile}
Assume the asset price distribution is given by~\eqref{e:enhancedModel}. Then
\begin{itemize}
\item[(i)] the function $\pp \mapsto I^{(\pp)}(x)$ is increasing on $[0,1)$ for all $x \in \R$;
\item[(ii)] the function $\Delta I^{(\pp)}:x\mapsto I^{(\pp)}(x) - I^{(0)}(x)$ satisfies
\be \label{e:deltaI}
\sqrt T \Delta I^{(\pp)}(x) \sim \pp \: \theta(x)
\qquad \text{for all $x\in \R$, as } \pp \text{ tends to }0,
\ee
where $\theta(x) = \displaystyle \frac{1 - \tilde F_{S_0}(K_x^-)}{n(d_2(x,I^{(0)}(x))}$
satisfies
$\lim\limits_{x \downarrow -\infty} \theta(x) = +\infty$.
\end{itemize}
\end{theorem}
\begin{proof}
\emph{(i)} 
It is clear from the definition of the Put price that 
$P^{(\pp)}(K) = \pp K + (1-\pp) \int_0^K \tilde F_{\frac{S_0}{1-\pp}}(y) dy$.
Using the definition of $\tilde S$, we immediately have 
$\tilde F_{s'}(y) = \Prob(s' X \le y) = \tilde F_s \left(\frac s{s'} y \right)$ for all $s, s' \maj 0$.
Therefore, $P^{(\pp)}(K) = \pp K + (1-\pp) \int_0^K \tilde F_{S_0}((1-\pp)y) dy = \pp K + \int_0^{(1-\pp)K} \tilde F_{S_0}(y) dy$. 
Then, for any $0\leq \pp_1 \leq \pp_2 <1$, 
$$
P^{(\pp_2)}(K) - P^{(\pp_1)}(K)
= (\pp_2 - \pp_1)(K)
- \int_{(1-\pp_2)K}^{(1-\pp_1)K} \tilde F_{S_0}(y) dy
=
\int_{(1-\pp_2)K}^{(1-\pp_1)K} \left(1 - \tilde F_{S_0}(y) \right) dy,
$$
so that, for every $K \maj 0$ the map $\pp \mapsto P^{(\pp)}(K)$ is increasing, 
and so is $\pp \mapsto I^{(\pp)}(x)$ on~$[0,1)$, for every $x\in\R$.

\emph{(ii)}
Let us denote $K = K_x$ for simplicity.
By the definition of $\Delta I^{(\pp)}$, we have $S_0 P_{\BS}(x,I^{(0)}(x)+\Delta I^{(\pp)}(x)) = P^{(\pp)}(K)$.
It is clear, for every $x$, $\Delta I^{(\pp)}(x) \to 0$ as $\pp \to 0$ ($I^{(\pp)}$ is a continuous function of $\pp$), therefore 
\be \label{e:deltaI1} 
P_{\BS}(x,I^{(0)}(x)) + \partial_{\sigma}P_{\BS}(x,I^{(0)}(x)) \Delta I^{(\pp)}(x) \left(1+o(1)\right) = \frac1{S_0} P^{(\pp)}(K)
\qquad \mbox{as } \pp \to 0.
\ee
In~(i) of the present proof we have shown that 
$P^{(\pp)}(K) = \pp K + \int_0^{(1-\pp)K} \tilde F_{S_0}(y) dy$. 
Equation~\eqref{e:deltaI1} then yields
\[
\partial_{\sigma}P_{\BS}(x,I^{(0)}(x)) \Delta I^{(\pp)}(x) \left(1+o(1)\right) = 
\frac1{S_0} \left( P^{(\pp)}(K) - P^{(0)}(K) \right)
=
\frac{\pp K}{S_0} \biggl(1  - \frac 1{\pp K} \int_{(1-\pp)K}^K \tilde F_{S_0}(y) dy \biggr)
\]
or yet
\[
\Delta I^{(\pp)}(x) \sim 
\frac{\pp K}{S_0 \partial_{\sigma}P_{\BS}(x,I^{(0)}(x))} (1 - \tilde F_{S_0}(K^-))
\qquad \mbox{as } \pp \text{ tends to zero}.
\]
Using the well-known expression $S_0 \partial_{\sigma}P_{\BS}(x,\sigma) = \sqrt T K n(d_2(x,\sigma))$, 
we obtain~\eqref{e:deltaI}.
Finally, since the limit
$\lim_{x \downarrow -\infty} d_2(x,I(x)) = +\infty$ holds for any distribution without mass at zero
(Lemma~\ref{l:d2Estimate}), we also have $\lim_{x \downarrow -\infty} n(d_2(x,I^{(0)}(x))) = 0$.
Since $\lim\limits_{x \downarrow -\infty} (1 - \tilde F_{S_0}(S_0 e^x))=1$, 
then $\lim\limits_{x \downarrow -\infty} \theta(x) = +\infty$.
\end{proof}
\medskip

\noindent
Point (ii) in Theorem~\ref{t:impactOnSmile} shows that the impact of a mass at zero is stronger for small strikes.

\begin{remark}
Assume that the reference model $\tilde \mu$ follows the Black-Scholes distribution with volatility parameter $\sigma$, so that $\tilde F_{S_0}(K_x) = \mathcal{N}(-d_2(x,\sigma))$.
Then, as $x \to \infty$,
\[
\begin{aligned}
\theta(x) = \frac{\mathcal{N}(d_2(x,\sigma))}{n(d_2(x,\sigma))}
\sim \frac 1{|d_2(x,\sigma)|}
\sim \frac {\sigma \sqrt T}{x},
\end{aligned}
\] 
where we used the well-known expansion $\mathcal{N}(z) = n(z) \frac 1{|z|} (1+o(1))$ as $z \to -\infty$. 
On this example, we see that the impact of a mass at zero (quantified by the function $\theta(\cdot)$) becomes asymptotically negligible as~$K$ tends to infinity.
This phenomenon can be seen clearly in our numerical tests on the Merton model with jump-to-default in Section~\ref{s:merton0}, see Figure~\ref{f:smilesBSAtom}.
\end{remark}

\section{Asymptotic estimates}
\label{s:maximalSlope}

The moment formula~\eqref{e:LeeLeft} guarantees that $\limsup_{x\downarrow -\infty} I(x)^2 T/|x|$ is strictly smaller than $2$ when $q^*>0$.
The two situations where the $\limsup$ reaches the level $2$, then, are $q^*=0$ and $\pp=0$
(heavy left tail but no mass at zero), and $\pp>0$.
The former case is considered by Gulisashvili~\cite{GulIJTAF}; our focus is on the latter.

\begin{example} \label{ex:HullWhite}
In the Hull-White stochastic volatility model, the stock price process satisfies the stochastic differential equation $\D S_t=S_t |Z_t| \D W_t$, with $S_0 >0$, and $Z$ is a lognormal process satisfying $\D Z_t = \nu Z_t \D t + \xi Z_t \D B_t$, with {$W$ and $B$ two correlated Brownian motions} $\D \langle W,B\rangle_t = \rho \D t$.
For all $T \ge 0$, $S_T$ is a strictly positive and integrable random variable. As shown in~\cite{GulStHW}, all the moments of $S_T$ of order smaller than zero or larger than one are infinite: $q^* = 0 = p^* := \sup\{p\geq 0: \mathbb{E}(S_T^{1+p})<\infty\}$.
\end{example}

\subsection{First-order behaviour} \label{s:firstOrderAtom}

In the spirit of the tail-wing formula~\cite{BenFr}, the expansion~\eqref{e:generalTailWing} 
allows to convert Lee's moment formula into an asymptotic equivalence.
This requires to study the behaviour of $\psi \bigl(\frac{\log P(K)}{\log K} - 1\bigr)$ for small~$K$.
The identity
\be \label{e:limInf}
\liminf_{K\downarrow 0} \frac{\log P(K)}{\log K} = 1+q^*
\ee
is given in~\cite[Lemma 4.5]{GulForm}.
In general, $\limsup_{K \downarrow 0} \frac{\log P(K)}{\log K}$ is not necessarily equal to $1+q^*$. 
In Gulisashvili~\cite{GulIJTAF}, conditions on the Put price function equivalent to $\limsup_{K \downarrow 0} \frac{\log P(K)}{\log K}=1+q^*$ are given.
Let us recall Gulisashvili's result for the case $q^*=0$ of interest to us:

\begin{theorem}[Theorem 3.6 in~\cite{GulIJTAF}] \label{t:thmGulIJTAF}
If $q^*=0$ and $\pp=0$, then the following statements are equivalent:
\begin{itemize}
\item[(i)]
$\limsup_{K \downarrow 0} [\log P(K)/\log K] = 1$;
\item[(ii)]
$\sqrt T I(x) \sim \sqrt{2|x|}$, as $x$ tends to $-\infty$;
\item[(iii)] there exist $\underline{K}>0$ and a regular varying\footnote{
A function $f$ is regularly varying of order $\alpha \in \R$ if it is defined on some neighbourhood of infinity, measurable, and such that the ratio $\frac{f(\lambda x)}{f(x)}$ converges to $\lambda^{\alpha}$ as $x$ tends to infinity,
for every $\lambda>0$.} function $h$ of order $-1$ such that $h(\frac{1}{K}) \le P(K)$ for $K\in (0,\underline{K})$.
\end{itemize}
\end{theorem}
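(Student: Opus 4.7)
The plan is to use \eqref{e:limInf}, which under the assumption $q^*=0$ reads $\liminf_{K\downarrow 0}\log P(K,T)/\log K=1$, so that (i) is in fact equivalent to the existence of the limit $\log P(K,T)/\log K\to 1$ as $K\downarrow 0$. Moreover, $\pp=0$ combined with \eqref{e:putAsympt} gives $P(K,T)/K\to 0$, whence $\log(K/P(K,T))\to+\infty$ and the error term in Gulisashvili's expansion \eqref{e:generalTailWing} is $o(1)$. The three implications will be proved separately.

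For $(\mathrm{iii})\Rightarrow(\mathrm{i})$, I would write $h(y)=y^{-1}\ell(y)$ with $\ell$ slowly varying, so that the standard property $\log\ell(y)=o(\log y)$ yields $\log h(1/K)=\log K+o(|\log K|)$; dividing the inequality $\log P(K,T)\ge \log h(1/K)$ by the negative quantity $\log K$ gives $\limsup_{K\downarrow 0}\log P/\log K\le 1$, which combined with the liminf equality yields (i). For $(\mathrm{i})\Leftrightarrow(\mathrm{ii})$, the argument $\log P/\log K-1$ inside $\psi$ in \eqref{e:generalTailWing} tends to $0$ iff (i) holds, and then continuity of $\psi$ with $\psi(0)=2$ turns the leading term into $\sqrt{2|\log K|/T}$, giving (ii) after the substitution $x=\log(K/S_0)$. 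The reverse uses the strict monotonicity of $\psi$ (which is easily checked from \eqref{e:psiMomentFormula}): (ii) reads $I(K)^2T/|\log K|\to 2$, \eqref{e:generalTailWing} then forces $\psi(\log P/\log K-1)\to 2$, and monotonicity yields $\log P/\log K\to 1$.

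The bulk of the work is $(\mathrm{i})\Rightarrow(\mathrm{iii})$: I must produce a regularly varying lower bound of index $-1$ out of a pointwise logarithmic asymptotic only. Set $\eta(y):=\log P(1/y,T)/\log(1/y)-1$, so that $\log P(1/y,T)=-(1+\eta(y))\log y$; under (i) $\eta(y)\to 0$, and since $P(K,T)/K\to 0$ forces $|\log P|>|\log K|$ eventually, one has $\eta(y)>0$ for $y$ large. Replace $\eta$ by its non-increasing envelope $\bar\eta(y):=\sup_{z\ge y}\eta(z)\downarrow 0$ and define
\[
\ell(y):=\exp\left(-\int_{y_0}^{y}\frac{\bar\eta(t)}{t}\D t\right),\qquad h(y):=\tfrac12\,y^{-1}\ell(y),
\]
for $y\ge y_0$ large enough. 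Since $-\bar\eta\to 0$, Karamata's representation theorem makes $\ell$ slowly varying, hence $h$ regularly varying of index $-1$. The monotonicity of $\bar\eta$ yields the key lower bound $\int_{y_0}^{y}\bar\eta(t)/t\,\D t\ge \bar\eta(y)\log(y/y_0)\ge \eta(y)\log(y/y_0)$, so that $\log h(y)\le \log P(1/y,T)+\eta(y)\log y_0-\log 2$, and the residual $\eta(y)\log y_0\to 0$ is absorbed by the factor $1/2$ for $y$ large, giving $h(1/K)\le P(K,T)$ on $(0,\underline K)$.

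The main obstacle is precisely this last construction: promoting a logarithmic asymptotic to a genuine regularly varying minorant. The crucial trick is the envelope $\bar\eta$, which simultaneously ensures Karamata slow variation of $\ell$ (thanks to $\bar\eta\to 0$) and the pointwise lower bound $\int\bar\eta(t)/t\,\D t\ge \eta(y)\log y$ needed to dominate $P$ from below---a double role that a non-monotonic $\eta$ could not play.
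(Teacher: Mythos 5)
Your proposal is correct, and on the $(\mathrm{i})\Leftrightarrow(\mathrm{ii})$ part it coincides with what the paper does: the paper argues, exactly as you do, that in light of~\eqref{e:generalTailWing} condition (ii) is equivalent to $\psi\bigl(\log P(K,T)/\log K-1\bigr)\to 2$, hence to $\log P(K,T)/\log K\to 1$ by continuity of $\psi^{-1}$, which together with~\eqref{e:limInf} gives (i). Where you genuinely diverge is on condition (iii): the paper does not prove this equivalence at all, but refers to~\cite{GulIJTAF}, whose strategy is indirect---first establish the right-wing analogue relating $I(x)\sim\sqrt{2x/T}$ as $x\uparrow\infty$ to a regularly varying minorant of the \emph{Call} price, then transfer to the left wing via Put-Call symmetry (the measure change $\D\Q/\D\Prob=S_T/S_0$). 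Your argument instead stays entirely on the left wing: $(\mathrm{iii})\Rightarrow(\mathrm{i})$ follows from the elementary property $\log\ell(y)=o(\log y)$ of slowly varying functions, and $(\mathrm{i})\Rightarrow(\mathrm{iii})$ is handled by an explicit Karamata-type construction, where the monotone envelope $\bar\eta(y)=\sup_{z\ge y}\eta(z)$ plays the double role of making $\ell(y)=\exp\bigl(-\int_{y_0}^y\bar\eta(t)t^{-1}\D t\bigr)$ slowly varying and of guaranteeing the pointwise domination $h(1/K)\le P(K,T)$; this construction is sound (the residual $\eta(y)\log y_0$ is indeed killed by the factor $1/2$). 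What each approach buys: Gulisashvili's route recycles existing right-wing machinery but hinges on Put-Call symmetry, which is precisely what breaks down when $\pp>0$---the reason the present paper needs Remark~\ref{r:thmGulIJTAF} and a new treatment; your route is self-contained, more elementary, and never invokes the measure change, so it isolates the symmetry-free content of the equivalence (though your $(\mathrm{i})\Leftrightarrow(\mathrm{ii})$ step still rests on~\eqref{e:generalTailWing}, which itself requires $\pp=0$, so your proof does not by itself extend the theorem beyond its stated hypotheses).
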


In light of~\eqref{e:generalTailWing}, Condition~(ii) in the previous theorem is equivalent to 
$\lim_{K \downarrow 0} \psi\bigl(\frac{\log P(K)}{\log K} - 1\bigr)=2$, 
or equivalently $\lim_{K \downarrow 0} \frac{\log P(K)}{\log K}=1$ by continuity of $\psi^{-1}$: considering~\eqref{e:limInf}, (ii) is then equivalent to (i).
For the proof of the equivalence between (ii) and (iii) we refer to~\cite{GulIJTAF}: the approach is first to show the equivalence between $\sqrt T I(x) \sim \sqrt{2x}$ for large~$x$ and a condition on the Call price function analogous to~(iii) (see~\cite[Theorem 3.2]{GulIJTAF}), and second to apply the Put-Call symmetry in order to transfer the result from the right to the left wing.
Because of the lack of Put-Call symmetry when the law of the stock price has a mass at zero, 
this approach is a priori not justified when $q^*=0$ and $\pp>0$, 
just as it happens for the asymptotic formula~\eqref{e:generalTailWing}.
We shall get back to this point in Remark~\ref{r:thmGulIJTAF}.
Let us state a preliminary result on the behaviour of the implied volatility when $\pp>0$.
Similarly to Theorem~\ref{t:thmGulIJTAF}, the following proposition reinforces~\eqref{e:LeeLeft} to a true limit.

\begin{proposition} \label{p:asymptAtom}
If $\pp>0$, then
$\sqrt T I(x) \sim \sqrt{2 |x|}$ as $x$ tends to $-\infty$.
\end{proposition}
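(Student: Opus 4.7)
The proposition asserts that $I(x)/\sqrt{2|x|/T}\to 1$ as $x\downarrow-\infty$ under the assumption $\pp>0$. Since $\pp>0$ forces $q^*=0$, Lee's moment formula~\eqref{e:LeeLeft} immediately gives $\limsup_{x\downarrow-\infty} I(x)/\sqrt{2|x|/T}\le 1$. The plan is therefore to establish the matching liminf lower bound, which cannot be read off from~\eqref{e:generalTailWing} in this setting because the Put--Call symmetry underlying that formula fails.

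The elementary input I would use is the pointwise bound $(K-S_T)^+\ge K\,\indicator_{\{S_T=0\}}$ (also visible from~\eqref{e:putAsympt}), which yields $P(K,T)\ge \pp\, K$ for every $K>0$. Fix $\eps\in(0,1)$ and set $\sigma_\eps(x):=(1-\eps)\sqrt{2|x|/T}$. Since $P_{\BS}(K_x,T;S_0,\cdot)$ is strictly increasing and $P(K_x,T)=P_{\BS}(K_x,T;S_0,I(x))$, it is enough to prove
\be\label{e:planGoal}
\frac{P_{\BS}(K_x,T;S_0,\sigma_\eps(x))}{K_x}\xrightarrow[x\downarrow-\infty]{}0;
\ee
once this quantity lies below $\pp\le P(K_x,T)/K_x$, monotonicity forces $I(x)\ge \sigma_\eps(x)$, so that $\liminf_{x\downarrow-\infty} I(x)/\sqrt{2|x|/T}\ge 1-\eps$, and letting $\eps\downarrow 0$ closes the argument.

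To verify~\eqref{e:planGoal} I would substitute $\sigma_\eps(x)$ into~\eqref{eq:d12}, yielding
\[
d_2(x,T,\sigma_\eps)=\sqrt{\tfrac{|x|}{2}}\,\frac{2\eps-\eps^2}{1-\eps},
\qquad
d_1(x,T,\sigma_\eps)=\sqrt{\tfrac{|x|}{2}}\,\frac{2-2\eps+\eps^2}{1-\eps},
\]
both of order $\sqrt{|x|}$ and strictly positive for small $\eps$. Writing $P_{\BS}(K_x,T;S_0,\sigma)/K_x=N(-d_2)-\E^{-x}N(-d_1)$ and using the Mills ratio estimate $N(-d)\sim \phi(d)/d$ as $d\to+\infty$, the term $N(-d_2)$ decays exponentially in $|x|$. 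For the second term, the Gaussian exponent satisfies $|x|-d_1^2/2=|x|\bigl(1-(2-2\eps+\eps^2)^2/(4(1-\eps)^2)\bigr)$, which is strictly negative since $(2-2\eps+\eps^2)-2(1-\eps)=\eps^2>0$. Both contributions vanish, establishing~\eqref{e:planGoal}.

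The only real technical step is the Mills ratio computation; the whole argument then rests on the one-line algebraic identity just above. This is also the reason the proposition delivers only leading-order asymptotics—pinning down the second- and third-order terms appearing in Theorem~\ref{t:theoremIntro} will instead require a substantially more delicate analysis of the implicit equation defining $I(x)$.
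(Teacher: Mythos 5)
Your proof is correct. The upper bound is exactly the paper's use of Lee's moment formula~\eqref{e:LeeLeft}, and your lower bound computation checks out: with $\sigma_\eps(x)=(1-\eps)\sqrt{2|x|/T}$ both $d_{1,2}(x,T,\sigma_\eps)$ are positive multiples of $\sqrt{|x|}$, the identity $(2-2\eps+\eps^2)-2(1-\eps)=\eps^2>0$ makes the exponent $|x|-d_1^2/2$ strictly negative, so $P_{\BS}(K_x,T;S_0,\sigma_\eps(x))/K_x\to 0$, which eventually drops below $\pp\le P(K_x,T)/K_x$ and forces $I(x)\ge\sigma_\eps(x)$ by monotonicity of $P_{\BS}$ in the volatility.

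Your route is, however, genuinely different from both proofs in the paper. The paper's main-line argument obtains the proposition as a corollary of the much stronger second-order expansion of Theorem~\ref{t:secondOrder}, which rests on the $d_2$ analysis of Lemma~\ref{l:d2Estimate}. The paper's alternative proof in Appendix~\ref{a:firstOrderProof} also compares $P(K,T)$ with the Black--Scholes price at $\sigma_\gamma(x)=\sqrt{\gamma|x|/T}$, but takes the longer path of verifying, via Roper's conditions, that $\sigma_\gamma$ is itself a genuine arbitrage-free implied volatility whose underlying distribution carries no mass at zero, and then invoking the comparison $\overline{P}(K,T)/P(K,T)\to 0$ valid for any atom-free Put price. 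Your argument sidesteps all of that: you never need $\sigma_\eps$ to be an implied volatility, only the elementary bound $P(K,T)\ge\pp K$ and a direct Mills-ratio estimate of the Black--Scholes Put. This is essentially the mechanism of Lee's original proof of the moment formula, which the paper itself acknowledges in Remark~\ref{r:LeeProof} as sufficient to establish the proposition. What your shorter route gives up is the by-product of the appendix proof (that $\sqrt{\gamma|x|/T}$ is a bona fide smile for $\gamma<2$) and, of course, the higher-order information delivered by Theorem~\ref{t:secondOrder}; your closing remark correctly anticipates that the finer terms require the separate $d_2$ analysis.
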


Proposition~\ref{p:asymptAtom} follows from a stronger statement given in Theorem~\ref{t:secondOrder} below.

\begin{remark} \label{r:thmGulIJTAF}
When $\pp>0$, the function $h_1(K) \equiv \pp/K$ is regularly varying of index~$-1$.
Since $P(K) = \esp[(K-S_T)^+] \ge \pp K$ for every $K\ge0$, 
Theorem~\ref{t:thmGulIJTAF}(iii) is satisfied with the function~$h_1$.
Moreover, note that~\eqref{e:putAsympt} implies $\log(P(K))=\log(K) + \Oo(1)$ as $K \downarrow 0$, 
or equivalently $\lim_{K\downarrow 0}\log(P(K))/\log(K) = 1=1+q^*$.
Then, in view of Proposition~\ref{p:asymptAtom}, Theorem~\ref{t:thmGulIJTAF} turns out to be true also in the case $\pp>0$.
\end{remark}

\begin{remark}
A positive mass at zero implies infinite  expectation for certain payoffs, such as $\log \frac{S_T}{S_0}$.
Indeed, the right-hand side of the model-free replication formula for the log contract,
\be \label{e:formulaLogContract}
-\esp \left(\log \frac{S_T}{S_0} \right)
=
\int_{0}^{S_0}\frac{P(K)}{K^2}\D K + \int_{S_0}^{\infty}\frac{C(K)}{K^2}\D K,
\ee
is infinite, in light of~\eqref{e:putAsympt}.
This warns about the use of~\eqref{e:formulaLogContract}--typically applied to quote the fair strike of a continuously monitored variance swap under a stochastic volatility assumption as in~\cite{Derman}--in 
models where $\pp \maj 0$.
\end{remark}

\subsection{Detecting the mass of the atom: the second-order behaviour}

In the previous section, we saw that the dimensionless implied volatility $\sqrt T I(x)$ is asymptotic to $\sqrt{2|x|}$ as $x \downarrow -\infty$ if and only if $q^*=0$ and one of the equivalent conditions (i) or (iii) in Theorem~\ref{t:thmGulIJTAF} is fulfilled (which is always true when $\pp>0$ by Remark~\ref{r:thmGulIJTAF}).
The next step is to understand how the difference $\sqrt T I(x)-\sqrt{2|x|}$ behaves.
The behaviour of the right wing ($x \uparrow +\infty$) has been studied by Lee~\cite[Lemma 3.1]{Lee}, 
who showed that $\sqrt T I(x)-\sqrt{2|x|}$ is negative for $x$ large enough, and subsequently refined by Rogers and Tehranchi~\cite[Theorem 5.3]{RogTeh}, who proved that
$\lim_{x \uparrow +\infty} (\sqrt T I(x) - \sqrt{2x}) = -\infty$
for every $T>0$.
For the left wing, the situation is different, and the qualitative behaviour of the second-order term depends on the presence of a mass at zero.

\begin{theorem} \label{t:secondOrder}
If $\pp=0$, then 
\be \label{e:secondOrderNoAtom}
\lim_{x \downarrow -\infty} \left( \sqrt T I(x) - \sqrt{2|x|} \right) = -\infty.
\ee
On the contrary, if $\pp>0$, then 
\be \label{e:firstExpansion}
\sqrt{T} I (x) =
\sqrt{2 |x|}
+ \qq
+ \frac{\sqrt{2 \pi}\exp\left(\frac{1}{2}\qq^2\right)}{K_x} \int_0^{K_x} (F(y)-F(0))\D y
+\frac{\overline \chi(x)}{2\sqrt{2|x|}}
+\Oo\biggl(\frac{F(K_x)-F(0)}{\sqrt{|x|}}\biggr),
\ee
where $\overline \chi$ is a function satisfying $\qq^2 \le \overline \chi(x)$ for all $x < 0$ and 
$\limsup\limits_{x \downarrow -\infty} \overline \chi(x) \le \qq^2 + 2\E^{\frac{1}{2}\qq^2}$.
\end{theorem}

Before giving the proof of Theorem~\ref{t:secondOrder}, let us state an immediate corollary.
Consider the following assumptions on the cumulative distribution function of $S_T$, as $K$ tends to zero:
\be \label{e:assumptionsCdf}
\begin{aligned}
(i) \quad F(K) - F(0) = \Oo\left(|\log(K/S_0)|^{-1/2}\right);
\\
(ii) \quad F(K) - F(0) = o\left(|\log(K/S_0)|^{-1/2}\right).
\end{aligned}
\ee

\begin{corollary} \label{c:secondOrder}
Assume $\pp>0$.
Then \footnote{The limit~\eqref{e:secondOrderNoAtom} also appeared in the preprint by Fukasawa~\cite{Fukasawa}, but was not reported in the published version of that paper, 
and the limit~\eqref{e:secondOrderAtom} appeared in the conference presentation 
of Tehranchi~\cite{TehPresentation}. 
We thank both authors for pointing this out to us, and Mike Tehranchi for sharing with us his proof of~\eqref{e:secondOrderAtom}.}
\be \label{e:secondOrderAtom}
\lim_{x \downarrow -\infty} \Bigl(\sqrt{T} I(x) - \sqrt{2|x|}\Bigr) = \qq;
\ee
if moreover~\eqref{e:assumptionsCdf}$(i)$  holds, then
\be \label{e:secondOrderExplicitError}
\sqrt{T} I (x) =
\sqrt{2 |x|}
+\qq
+\Oo\left(|x|^{-1/2}\right);
\ee
if moreover~\eqref{e:assumptionsCdf}$(ii)$ holds, then
\be \label{e:secondOrderWithErrorEstimate} 
\sqrt{T} I (x) =
\sqrt{2 |x|}
+\qq
+\frac{\chi(x)}{2\sqrt{2|x|}},
\ee
where $\chi$ is a function satisfying $\qq^2 \le \liminf\limits_{x \downarrow -\infty} \chi(x) 
\le \limsup\limits_{x \downarrow -\infty} \chi(x) \le \qq^2 + 2\E^{\frac{1}{2}\qq^2}$.
\end{corollary}
\begin{proof}
The limit~\eqref{e:secondOrderAtom} follows from~\eqref{e:firstExpansion}.
The other statements are immediate, noticing that $\overline{\chi}(x)=\Oo(|x|^{-1/2})$ and $F(K_x)-F(0)=\Oo(|x|^{-1/2})$ (resp. $o(|x|^{-1/2})$) under condition $(i)$ (resp. $(ii)$).
\end{proof}
\medskip

The following comments emphasise the relevance of Theorem~\ref{t:secondOrder}.

\begin{itemize}
\item In light of~\eqref{e:secondOrderNoAtom} and~\eqref{e:secondOrderAtom}, when the left asymptotic slope of the smile is maximal 
($\lim_{x \downarrow -\infty} T I(x)^2/|x| =2$), the difference between an underlying distribution that has a mass at the origin and one that does not can be seen at the second order in implied volatilities at small strikes.

\item The inspection of~\eqref{e:secondOrderExplicitError} reveals a `phase transition' in the shape of the implied volatility at the second-order: when  $\pp\neq 1/2$, the implied volatility has the form 
$\sqrt T I(x) =\sqrt{2 |x|}+const.+\Oo(|x|^{-1/2})$; when $\pp=1/2$, the constant vanishes and the expansion reduces to
$\sqrt T I(x) = \sqrt{2 |x|} +\Oo(|x|^{-1/2})$ under Condition~\eqref{e:assumptionsCdf}$(i)$.
In the latter case, the rate of convergence of the `normalised' implied volatility $I (x)\sqrt T/\sqrt{|x|}$ to its limit~$\sqrt2$ is $|x|^{-1}$ instead of $|x|^{-1/2}$.

\item The role played by the cumulative distribution function $F$ in~\eqref{e:firstExpansion} highlights a radical difference with the no-mass-at-zero case.
In the classical left tail-wing formula~\cite{BenFr},
\be \label{e:ltw}
\sqrt T I(x) \sim \sqrt{|x|} \sqrt{ \psi\left(\frac{-\log F(K_x)}{|x|}\right)},
\qquad \mbox{as } x \downarrow -\infty,
\ee
where $\psi$ is defined in~\eqref{e:psiMomentFormula}.
Note that the logarithm of the cdf $F(K_x)$ appears in the formula, instead of the cdf itself as in~\eqref{e:firstExpansion}.
In many stochastic volatility models, such as Heston and Stein-Stein, 
the cdf of the stock price satisfies~\cite{GulSt, Refined},
\be \label{e:cdfAsympt}
F(K_x) = A \: \E^{-\alpha_1 |x| + \alpha_2 \sqrt{|x|}} \: |x|^{\gamma} (1+\Oo(|x|^{-1/2}))
\qquad \mbox{as } x \downarrow -\infty,
\ee
for some constants $A,\alpha_1,\alpha_2>0$ and $\gamma \in \R$.
Therefore, $-\log F(K_x)/|x| = \alpha_1 + \Oo(|x|^{-1/2})$,
and~\eqref{e:ltw} returns--as expected--the leading-order square root behaviour $I(x) \sim \sqrt{\psi(\alpha_1)|x|}$
(subsequent refinements are of course possible using the precise asymptotics~\eqref{e:cdfAsympt}, as done in~\cite{Refined,GaoLee}; see also Remark~\ref{r:stochVol} below for further discussions).
For any distribution such that $F(K_x) - F(0)$ behaves as the right-hand side of~\eqref{e:cdfAsympt}, the terms of order equal or lower than $F(K_x)-F(0)=\Oo(\E^{-\alpha_1|x|})$ in~\eqref{e:firstExpansion} go to zero much faster than the $|x|^{-1/2}$ term.
\end{itemize}

\begin{remark}[On the limit~\eqref{e:secondOrderAtom}] \label{r:onSecondOrder}
Lemma 3.3 in~\cite{Lee} asserts that there exists $x^*$ such that $\sqrt T I(x) - \sqrt{2|x|}<0$ for all $x<x^*$ if and only if $0 \le \pp<1/2$.
In light of the estimate~\eqref{e:secondOrderAtom}, the difference $\sqrt T I(x) - \sqrt{2|x|}$ converges to a negative constant when $0<\pp<1/2$, to a positive constant when $\pp>1/2$, and to zero when $\pp=1/2$.
In diffusion models with absorption at zero such as the CEV model in Section~\ref{s:cev}, 
the case $\pp \approx 0$ corresponds to small $T$, while large values of $\pp$ (close to~$1$) 
correspond to large~$T$.
\end{remark}

The proof of Theorem~\ref{t:secondOrder} is based on the following two lemmas.

\begin{lemma} \label{l:remainderEstimate}
Let $R(K) \equiv K^{-1}P(K) - \pp$, for
$K>0$.
Then 
$R(K) = \frac{1}{K} \int_0^K (F(y)-F(0))\D y$.
In particular, $0 \le R(K)\le F(K)-F(0)$ for all $K>0$.
\end{lemma}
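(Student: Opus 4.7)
The plan is to derive the integral representation directly from the Breeden-Litzenberger identity recalled in the preliminaries, and then deduce the bounds from the monotonicity of $F$.

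First, I would use convexity of $P(\cdot,T)$ on $[0,\infty)$ together with the boundary value $P(0,T)=\esp(0-S_T)^+=0$ (since $S_T\ge0$) to write
\[
P(K,T)=\int_0^K \partial_K^+ P(L,T)\,\D L.
\]
Invoking the Breeden-Litzenberger relation~\eqref{e:BreedenLitz}, $\partial_K^+ P(L,T)=\Prob(S_T\le L)=F(L)$, this becomes $P(K,T)=\int_0^K F(L)\,\D L$. Splitting $F(L)=F(0)+(F(L)-F(0))=\pp+(F(L)-F(0))$ and dividing by $K$ yields
\[
\frac{P(K,T)}{K}=\pp+\frac{1}{K}\int_0^K (F(L)-F(0))\,\D L,
\]
which is exactly the claimed identity $R(K)=\frac{1}{K}\int_0^K(F(y)-F(0))\,\D y$.

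For the inequalities, I would note that $F$ is a cumulative distribution function, hence non-decreasing: for every $0\le L\le K$ we have $0\le F(L)-F(0)\le F(K)-F(0)$. Integrating these pointwise inequalities on $[0,K]$ and dividing by $K$ gives $0\le R(K)\le F(K)-F(0)$.

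There is no real obstacle here; the only minor subtlety is to justify the use of the fundamental theorem of calculus for the convex function $P(\cdot,T)$, which is standard (a convex function on $[0,\infty)$ is absolutely continuous on compacts and equal to the integral of its right derivative), and to confirm $P(0,T)=0$ from $S_T\ge 0$.
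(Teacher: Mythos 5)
Your proof is correct and follows essentially the same route as the paper: both rely on the representation $P(K,T)=\int_0^K \partial_K^+P(L,T)\,\D L$ (already recorded in the preliminaries before~\eqref{e:putAsympt}), the Breeden-Litzenberger identity~\eqref{e:BreedenLitz} to replace $\partial_K^+P(L,T)$ by $F(L)$ together with $F(0)=\pp$, and the monotonicity of $F$ for the final bounds. The only difference is that you spell out the justification of the fundamental-theorem step and the value $P(0,T)=0$, which the paper takes as given.
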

\begin{proof}
We have $R(K) = \frac1{K}P(K) - \pp = \frac1{K}\int_0^{K} F(y) \D y - F(0) = \frac1{K}\int_0^{K} (F(y)-F(0))\D y$.
The final estimate on~$R$ follows from the monotonicity of~$F$. 
\end{proof}

\begin{lemma}\label{l:d2Estimate}
If $\pp=0$, then $d_2(x, I(x)) \to +\infty$ as $x \downarrow -\infty$.
If $\pp>0$, then there exists a function $\varphi: (-\infty,0) \to (0,\infty)$ such that 
$\limsup_{x \downarrow -\infty} \varphi(x) \sqrt{2|x|} \le 1$, and such that the following estimate holds as $x \downarrow -\infty$:
$$
d_2(x, I(x)) = - \qq -\E^{\frac{1}{2}\qq^2} \varphi(x) - \frac{\sqrt{2\pi}}{K_x} \E^{\frac{1}{2}\qq^2}
\int_0^{K_x} (F(y)-F(0)) \D y
+ \Oo \left(\frac{1}{|x|}\right)
+ \Oo\left(\left(F(K_x)-F(0)\right)^2\right).
$$
\end{lemma}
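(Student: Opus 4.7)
The plan is to start from the identity defining the implied volatility, convert it into an equation for $N(-d_2(x,T))$, and invert $N$ using a first-order Taylor expansion at $\pp$. The technical issue---checking that the remainder $\E^{-x}N(-d_1(x,T))$ is indeed small---is resolved by bootstrapping the first-order estimate of Proposition~\ref{p:asymptAtom} through Mills' ratio.

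First I would combine the defining relation $P(K_x,T)=S_0 P_{\BS}(x,T;I(x))$ with the Black--Scholes formula \eqref{defPut} and Lemma~\ref{l:remainderEstimate}. Dividing through by $K_x = S_0 \E^x$, this becomes
\begin{equation}\label{eq:Nd2equation}
N(-d_2(x,T)) = \pp + R(K_x) + \E^{-x} N(-d_1(x,T)).
\end{equation}
Next, I apply $N^{-1}$ and Taylor expand at $\pp$. Since $(N^{-1})'(\pp) = 1/\phi(\qq) = \sqrt{2\pi}\,\E^{\qq^2/2}$, setting $\epsilon(x) := R(K_x) + \E^{-x} N(-d_1(x,T))$ one obtains
\[
d_2(x,T) = -\qq \,-\, \sqrt{2\pi}\,\E^{\qq^2/2}\,R(K_x) \,-\, \sqrt{2\pi}\,\E^{\qq^2/2}\,\E^{-x} N(-d_1(x,T)) \,+\, \Oo(\epsilon(x)^2).
\]
By Lemma~\ref{l:remainderEstimate}, the middle term is exactly $\frac{\sqrt{2\pi}}{K_x}\E^{\qq^2/2}\int_0^{K_x}(F(y)-F(0))\D y$, so the claim reduces to controlling $\varphi(x) := \sqrt{2\pi}\,\E^{-x} N(-d_1(x,T))$ and the squared error.

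The key estimate rests on the parity identity $d_1^2 - d_2^2 = (d_1-d_2)(d_1+d_2) = -2x$, which gives $\E^{-x}\phi(d_1)=\phi(d_2)$. Combined with Mills' inequality $N(-d_1)\le \phi(d_1)/d_1$ valid for $d_1>0$, this yields $\E^{-x}N(-d_1(x,T))\le \phi(d_2(x,T))/d_1(x,T)$. Proposition~\ref{p:asymptAtom} ensures $I(x)\sqrt{T}\sim \sqrt{2|x|}$, hence $d_1(x,T)\sim\sqrt{2|x|}\to+\infty$, which validates Mills' inequality. Feeding this bound into \eqref{eq:Nd2equation} together with $R(K_x)\to 0$ (right-continuity of $F$ at zero) forces $N(-d_2(x,T))\to\pp$, hence $d_2(x,T)\to -\qq$ and $\phi(d_2(x,T))\to \E^{-\qq^2/2}/\sqrt{2\pi}$. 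Therefore
\[
\limsup_{x\downarrow-\infty}|\varphi(x)|\sqrt{2|x|}
\;\le\; \limsup_{x\downarrow-\infty}\sqrt{2\pi}\,\phi(d_2(x,T))\,\frac{\sqrt{2|x|}}{d_1(x,T)}
\;=\; \E^{-\qq^2/2} \;\le\; 1,
\]
which is the bound required on $\varphi$.

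Finally, for the quadratic remainder, $\epsilon(x)^2 \le 2R(K_x)^2+2(\E^{-x}N(-d_1(x,T)))^2 \le 2(F(K_x)-F(0))^2+\Oo(1/|x|)$, using the upper bound $R(K_x)\le F(K_x)-F(0)$ from Lemma~\ref{l:remainderEstimate} together with the $\Oo(1/\sqrt{|x|})$ estimate on $\E^{-x}N(-d_1)$ just derived. The main obstacle is the apparent circularity: one needs $d_1(x,T)\to+\infty$ and $d_2(x,T)\to-\qq$ before Mills' bound and the Taylor expansion of $N^{-1}$ can be deployed, but this is precisely what Proposition~\ref{p:asymptAtom} unlocks, feeding into \eqref{eq:Nd2equation} as a bootstrap.
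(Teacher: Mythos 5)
Your argument is correct and essentially reproduces the paper's proof: the same identity $N(-d_2(x,T))=\pp+R(K_x)+\E^{-x}N(-d_1(x,T))$, the same first-order inversion of $N^{-1}$ at $\pp$, and the same Mills-ratio control of the remainder (your variant, via $\E^{-x}\phi(d_1)=\phi(d_2)$, even yields the slightly sharper constant $\E^{-\qq^2/2}\le 1$ in the limsup, where the paper settles for $1$). One caution: since Proposition~\ref{p:asymptAtom} is deduced in the main text from Theorem~\ref{t:secondOrder}, which relies on this very lemma, you should either cite its independent proof in Appendix~\ref{a:firstOrderProof} or, more simply, observe that $d_1(x,T)=\frac{|x|}{I(x)\sqrt T}+\frac{I(x)\sqrt T}{2}\ge\sqrt{2|x|}>0$ by the arithmetic--geometric mean inequality (and $d_1=\sqrt{2|x|+d_2^2}$ once $d_2\to-\qq$ is bootstrapped), which is all your Mills bound and final limsup actually require.
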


\begin{remark}
The limit $\lim_{x\downarrow-\infty}d_2(x, I(x)) = - \qq $
was given in~\cite[Theorem 1]{OOUY} under the additional assumption 
that~$I$ is differentiable and that the derivative has a limit at $-\infty$ (see their Assumption~1). Later on, Fukasawa~\cite{Fukasawa} proved that the limit holds without these assumptions.
\end{remark}

\begin{proof}[Proof of Lemma~\ref{l:d2Estimate}]
The identity 
$C_{\BS}(-x,I(x))=\esp\Bigl[\Bigl(1-\E^{-x}\frac{S_T}{S_0}\Bigr)^+\Bigr]$
follows from Put-Call parity.
Then, for every $x<0$,
\be \label{e:d1parityEstimate}
\begin{aligned}
\Nn(d_{1}(-x,I(x)))
&= C_{\BS}(-x,I(x)) + \E^{-x} \Nn(d_{2}(-x,I(x)))
\\
&= \mathbb{P}(S_T=0) +\mathbb{E}\left[\left(1-\E^{-x}\frac{S_T}{S_0}\right)^+\ind_{\{S_T>0\}}\right]
+\E^{-x} \Nn(d_{2}(-x,I(x)))
\\
&= \pp + R(K_x) + \bar\varphi(x),
\end{aligned}
\ee
where~$R(\cdot)$ is defined in Lemma~\ref{l:remainderEstimate} and 
$\bar\varphi(x) \equiv \E^{-x} \Nn(d_{2}(-x,I(x)))$.
By the arithmetic-geometric inequality, $d_{2}(-x,I(x))= -\frac{|x|}{I(x)\sqrt{T}}-\frac{I(x)\sqrt{T}}{2} \le -\sqrt{2|x|}$, hence $0 \le \bar\varphi(x) \le \E^{-x} \Nn(-\sqrt{2|x|})$.
The expansion
$\Nn(z) = n(z)\left(-z^{-1} + \Oo\left(z^{-3}\right)\right)$,
as $z$ tends to $-\infty$ yields $\E^{-x} \Nn(-\sqrt{2|x|}) = \frac1{\sqrt{2\pi}\sqrt{2|x|}} + \Oo\left(|x|^{-3/2}\right)$, and therefore
\be \label{e:barVarphiEstimate}
\limsup_{x \downarrow -\infty}\left(\bar\varphi(x) \sqrt{2\pi}\sqrt{2|x|}\right) \le 1.
\ee
Now using $d_2(x, I(x))=-d_1(-x,I(x))$, it follows from~\eqref{e:d1parityEstimate} 
and Lemma~\ref{l:remainderEstimate} that $\lim_{x \downarrow -\infty}\Nn(-d_{2}(x,I(x))) = 0$ when $\pp=0$, and hence $d_{2}(x,I(x))$ diverges to $+\infty$.
If $\pp\neq 0$, we get
\[
\begin{aligned}
d_2(x, I(x)) &= -\Nn^{-1} \left(\pp + R(K_x) + \bar\varphi(x) \right)
\\
&= -\Nn^{-1}(\pp) - n(\Nn^{-1}(\pp))^{-1} \left[R(K_x) + \bar\varphi(x)\right]
+ \Oo(R(K_x)^2) + \Oo(\bar\varphi(x)^2)
\\
&= -\qq - \E^{\qq^2/2} \sqrt{2\pi}\bar\varphi(x) - \E^{\qq^2/2} \sqrt{2\pi} R(K_x)
+ \Oo\left((F(K_x)-F(0))^2\right) + \Oo\left(|x|^{-1}\right),
\end{aligned}
\]
where we have used the bound on $R$ in Lemma~\ref{l:remainderEstimate} 
and estimate~\eqref{e:barVarphiEstimate} in the last step.
The claimed estimate is obtained setting $\varphi \equiv \sqrt{2\pi}\bar\varphi$ and using the expression of~$R$ in Lemma~\ref{l:remainderEstimate}.
\end{proof}
\medskip

\begin{proof}[Proof of Theorem~\ref{t:secondOrder}]
We first prove the limit~\eqref{e:secondOrderNoAtom}.
Assume that $\pp=0$.
Estimate~\eqref{e:d1parityEstimate} implies that for every $M>0$ we have $d_1(-x,I(x)) = \frac{x}{I(x)\sqrt{T}}+\frac{I(x)\sqrt{T}}{2} < -M$ for $x$ small enough, or yet $I(x)\sqrt{T} < -M +\sqrt{M^2+2|x|}$.
Therefore~\eqref{e:secondOrderNoAtom} follows from the following, which holds for every $M>0$:
\[
\limsup_{x \downarrow -\infty}
\left(I(x)\sqrt{T}-\sqrt{2|x|}\right) < -M +\limsup_{x \downarrow -\infty}(\sqrt{M^2+2|x|}-\sqrt{2|x|})
=-M.
\]

We now move on to the proof of~\eqref{e:firstExpansion}.
Let us write $d_2$ instead of $d_2(x, I(x))$ for simplicity.
According to the definition of $d_2$ on Page~\pageref{eq:d12}, $I(x)$ satisfies
\[
I(x) = \frac{1}{\sqrt T}\left(-d_2+\sqrt{d_2^2-2x}\right) 
 = \frac{2|x|}{\sqrt{2|x|T + T d_2^2} + \sqrt T d_2},
\]
so that 
\be \label{e:AandB}
\sqrt{T} I(x) - \sqrt{2|x|} = \frac{2|x| - \sqrt{4 x^2 + 2|x| d_2^2 }}{\sqrt{2|x| + d_2^2} + d_2}
-
\frac{\sqrt {2|x|} d_2}{\sqrt{2|x| + d_2^2} + d_2}
=: A(x) - B(x).
\ee
The expansion $\sqrt{f(x)} - \sqrt{f(x) + g(x)} = -\frac{g(x)}{2\sqrt{f(x)}} + \Oo(\frac{g(x)^2}{f(x)^{3/2}})$ 
when $f(x) \uparrow \infty$ 
and $g = o(f)$, together with
\be \label{e:factorExp}
\Bigl(\sqrt{2|x| + d_2^2} + d_2\Bigr)^{-1}=
\frac1{\sqrt{2|x|}} \left(1-\frac{d_2}{2\sqrt{2|x|}} + \Oo\left(|x|^{-1}\right)\right),
\qquad \text{as } x \downarrow-\infty,
\ee
allow to see that, as $x$ tends to $-\infty$,
\be \label{e:Aasympt} 
A(x) = \left(-\frac{d_2^2}2 + \Oo\left(|x|^{-1}\right)\right)
\frac1{\sqrt{2|x|}} \left(1-\frac{d_2}{2\sqrt{2|x|}} + \Oo\left(|x|^{-1}\right)\right)
=
-\frac{d_2^2}{2\sqrt{2|x|}} + \Oo\left(\frac{1}{|x|}\right).
\ee
On the other hand,
$\lim\limits_{x \downarrow -\infty} B(x) = \lim\limits_{x \downarrow -\infty} d_2 = -\qq$; 
then consider $B(x) + \qq = B_1(x)+B_2(x)$,
where
\[
B_1(x) \equiv
\frac{\qq d_2}{\sqrt{2|x| + d_2^2} + d_2},
\qquad
B_2(x) \equiv h(x) \left( d_2 + \qq \sqrt{1 + \frac{d_2^2}{2|x|}} \right),
\]
and $h(x) \equiv \frac{\sqrt{2|x|}} {\sqrt{2|x| + d_2^2} + d_2}$.
Using~\eqref{e:factorExp}, one has $B_1(x) = \frac{\qq d_2}{\sqrt{2|x|}}+\Oo(|x|^{-1})$ 
as $x \downarrow -\infty$.
Moreover, since
\[
d_2 + \qq \sqrt{1 + \frac{d_2^2}{2|x|}} =
-\frac{\qq d_2^2}{4|x|} + \tilde \varphi(x)+\Oo(|x|^{-2}),
\]
where $\tilde{\varphi}(x)\equiv d_2 + \qq = \Oo(|x|^{-1/2})$ by Lemma~\ref{l:d2Estimate}, we obtain
$$
B_2(x) = 
\left(1-\frac{d_2}{2\sqrt{2|x|}} + \Oo\left(|x|^{-1}\right)\right)
\left(-\frac{\qq d_2^2}{4|x|} + \tilde \varphi(x)+\Oo(|x|^{-2}) \right)
=
\tilde \varphi(x) - \frac{\qq d_2^2}{4|x|}
+\Oo\left(\tilde \varphi(x) |x|^{-1/2}\right).
$$
Finally putting~\eqref{e:Aasympt}, the estimate on $B_1(x)$ and this estimate on $B_2(x)$
together, it follows from~\eqref{e:AandB} that
\[
\begin{aligned}
\sqrt{T} I(x) - \sqrt{2|x|} - \qq
&=
A(x)-(B_1(x)+B_2(x))
\\
&= -\frac{d_2}{\sqrt{2|x|}} \left(\frac{d_2}2 + \qq\right)
-\tilde\varphi(x)
+\Oo\left(\tilde \varphi(x)|x|^{-1/2}\right)
+ \Oo(|x|^{-1})
\\
&= \frac{\qq - \tilde\varphi(x)}{\sqrt{2|x|}} \left(\frac{\qq}{2} + \frac{\tilde\varphi(x)}2\right)
-\tilde\varphi(x)
+\Oo\left(\tilde \varphi(x)|x|^{-1/2}\right)
+ \Oo(|x|^{-1})
\\
&= \frac{\qq^2}{2{\sqrt{2|x|}}} 
-\tilde\varphi(x)
+ \Oo(|x|^{-1}),
\end{aligned}
\]
as $x \downarrow -\infty$.
The claimed estimate on $I(x)$ now follows from Lemma~\ref{l:d2Estimate}, setting $\overline\chi(x) \equiv \frac{\qq^2}2 - \sqrt{2|x|} \tilde\varphi(x) =
\frac{\qq^2}2 + \E^{\qq^2/2}\sqrt{2|x|} \varphi(x)$, where $\varphi$ is given in Lemma~\ref{l:d2Estimate}.
\end{proof}

\section{A refined formula for the implied volatility} \label{s:asymFormAtom}

The estimates~\eqref{e:secondOrderExplicitError} and~\eqref{e:secondOrderWithErrorEstimate} contain a global $\Oo(|x|^{-1/2})$ error term.
Slightly after the first version of this paper appeared, Gulisashvili~\cite{ArchilAtom} proved a similar, albeit slightly different, expansion for the left wing of the smile when the stock price has a mass at the origin, 
with a stronger $\Oo(|x|^{-3/2})$ term.
Under the assumption
\be \label{e:archilCdf}
F(K) - F(0) = \Oo\left(|\log(K/S_0)|^{-3/2}\right)
\qquad
\mbox{as } K \downarrow 0,
\ee
Gulisashvili~\cite[Corollary 8]{ArchilAtom} proves the expansion
\be \label{e:archilAtom}
\sqrt T I(x) =
\sqrt{2 |x|}
+ U(x,\cdot)^{-1}(\pp)
+ \frac{(U(x,\cdot)^{-1}(\pp))^2}{2\sqrt{2|x|}}
+ \Oo\left(|x|^{-3/2}\right),
\qquad \mbox{as $x \downarrow -\infty$},
\ee
where the function $U$ is defined by $U(x,z) \equiv \Nn(z) - n(z)/\sqrt{2|x|}$.
It is easy to check that, for every $x \neq 0$, the inverse function $U(x, \cdot)^{-1}$ is well-defined on the interval $(0,1)$.
In~\cite[Theorem 6]{ArchilAtom}, an expansion analogous to~\eqref{e:archilAtom} is first proved with 
$U(x,\cdot)^{-1}(\pp)$ replaced by $U(x,\cdot)^{-1}(P(K_x)/K_x)$, without assuming~\eqref{e:archilCdf}.
Formula~\eqref{e:archilAtom} then follows as a corollary under Condition~\eqref{e:archilCdf}.
The main difference between~\eqref{e:archilAtom} and the lower-order 
expansions~\eqref{e:secondOrderExplicitError}-\eqref{e:secondOrderWithErrorEstimate} 
is that the $x$-dependence of the term $U(x,\cdot)^{-1}(\pp)$ is not explicit.
We are going to refine~\eqref{e:archilAtom} in two directions:
\begin{itemize}
\item[1.] we provide an explicit asymptotic formula~\eqref{e:explicitExp} 
with the same accuracy as~\eqref{e:archilAtom};
by `explicit', we refer to the fact that our final expression is an expansion in powers of $\sqrt{|x|}$, 
with explicit coefficients.
\item[2.] we estimate the constant in front of the $\Oo\left(|x|^{-3/2}\right)$ error term.
\end{itemize}

The following result refines~\cite[Theorem 6]{ArchilAtom}.

\begin{theorem}\label{thm:ExpansionIV}
Assume $\pp \in (0,1)$. Recall the function $R(\cdot)$ defined in Lemma~\ref{l:remainderEstimate} and set $r(x) = R(K_x)$, so that $r(x) = \frac1 {K_x} \int_0^{K_x} (F(y)-F(0)) dy$.
The following expansion holds for any $x<0$:
\[
I(x)\sqrt{T} =
\sqrt{2 |x|}
 + U(x,\cdot)^{-1}(\pp + r(x))
 + \frac{U(x,\cdot)^{-1}(\pp + r(x))^2}{2\sqrt{2|x|}}
 + \left(a(x) - \frac{U(x,\cdot)^{-1}(\pp + r(x))^4}{16\sqrt{2}}\right)\frac{1}{|x|^{3/2}}
 + o\left(|x|^{-3/2}\right),
\]
where the function~$a(\cdot)$ is such that $a(x) \le 0$ for $x$ small enough, 
and $\liminf\limits_{x \downarrow -\infty} a(x) \ge -A(\qq) \equiv -\displaystyle \frac{3\qq^2+2}{4\sqrt{2}} e^{\qq^2/2}$.
\end{theorem}
\begin{proof}
Set $u(x) \equiv U(x,\cdot)^{-1}(\pp + r(x))$ and $u_{\eps}(x) \equiv U(x,\cdot)^{-1} \left( \pp + r(x) - \frac {B_{\eps}}{|x|^{3/2}} \right)$, where $B_{\eps} = \frac{3\qq^2+2+\eps}{8 \sqrt{\pi}}$.
The following estimates are given in~\cite[Theorem 12]{ArchilAtom}: for every $\eps \maj 0$, there exist $x_{\eps}$ such that\footnote{En passant, we correct a sign error in the definition of~$u_{\eps}$
in~\cite[Equation (2.5)]{ArchilAtom}, 
where $u_{\eps}(x) \equiv U(x,\cdot)^{-1} \left( \pp + r(x) + \frac{B_{\eps}}{|x|^{3/2}} \right)$.
Since $U(x,\cdot)^{-1}$ is increasing on $(0,1)$, the latter definition entails $u_{\eps}(x) \maj u(x)$ for $x$ small.
This would imply $h_{2,\eps}(x) \maj h_1(x)$, in contradiction with~\eqref{e:implVolUpperAndLower}.
The correct definition of~$u_{\eps}$ follows by inspection of the proof of~\cite[Theorem 12]{ArchilAtom}.}
\be \label{e:implVolUpperAndLower}
\sqrt 2 \sqrt{|x| + h_{2,\eps}(x)}
\le \sqrt T I(x) \le
\sqrt 2 \sqrt{|x| + h_1(x)},
\ee
for all $x \mino x_{\eps}$.
The functions $h_1$ and $h_{2,\eps}$ in~\eqref{e:implVolUpperAndLower} are defined as follows:
\[
h_1(x) = u(x) \sqrt{2|x| + u(x)^2} + u(x)^2,
\qquad
h_{2,\eps} = u_{\eps}(x) \sqrt{2|x| + u_{\eps}(x)^2} + u_{\eps}(x)^2
\]
The estimates~\eqref{e:implVolUpperAndLower} are proved in~\cite{ArchilAtom} by comparing the Put price $P_{\BS}(x,I(x))$ with two properly chosen Put prices $P_{\BS}(x,I_1(x))$ and $P_{\BS}(x,I_2(x))$.

\emph{Step 1} (Estimate of the difference between the upper and lower bounds 
in~\eqref{e:implVolUpperAndLower}). We first estimate the difference between $u(x)$ and $u_{\eps}(x)$.
By the mean value theorem, there exists $\xi\in(u(x), u_\eps(x))$ such that
$U(x,u(x)) - U(x,u_\eps(x)) = \partial_z U(x, \xi)(u(x) - u_{\eps}(x))$, where $\partial_z U(x, z) = n(z)\left(1 + \frac{z}{\sqrt{2|x|}} \right)$.
It is not hard to see that $U(x,\cdot)^{-1}(\pp) \to \mathcal{N}^{-1}(\pp) = \qq$ as $x \downarrow -\infty$.
Next, we observe that the function $U(x,\cdot)^{-1}$ is locally Lipschitz inside $(0,1)$, with a Lipschitz constant independent of $x$ on any fixed subinterval of $(0,1)$.
Since $r(x)$ converges to zero, it follows that $u(x)$ tends to~$\qq$ and~$u_{\eps}(x)$ to~$\qq$ 
as $x \downarrow -\infty$.
Overall, $\xi$ converges to~$\qq$, so that
\[
u(x) - u_{\eps}(x)
=
\frac{1}{\partial_z U(x, \xi)}
\frac{B_\eps}{|x|^{3/2}}
\sim
\frac{1}{n(\qq)} \frac{B_\eps}{|x|^{3/2}}
= 
\frac{A_{\eps}}{|x|^{3/2}}
\qquad
\mbox{as } x \downarrow -\infty,
\]
where $A_\eps = \frac{3\qq^2+2+\eps}{4\sqrt{2}} e^{\qq^2/2}$.

We now estimate the difference between the functions $h_1$ and $h_{2,\eps}$.
We have already proved above that both~$u$ and~$u_\eps$ are bounded at $-\infty$. 
Using the Taylor expansion $\sqrt{1+y} = 1+\frac12 y + \Oo(y^2)$ as $y \to 0$, we have
\[
\begin{aligned}
h_1(x) - h_{2,\eps}(x)
&=
\sqrt{2|x|} \left( u(x) \sqrt{1 + \frac{u(x)^2}{2|x|}} - u_{\eps}(x) \sqrt{1 + \frac{u_\eps(x)^2}{2|x|}} \right)
+ u(x)^2 - u_{\eps}(x)^2
\\
&=
\sqrt{2|x|} \left( u(x) - u_{\eps}(x) 
+ \frac{u(x)^3 - u_{\eps}(x)^3}{4|x|}
+ \Oo(|x|^{-2}) \right)
+ \Oo\left(u(x) - u_{\eps}(x)\right)
\\
&=
\frac{\sqrt 2 A_{\eps}(\qq)}{|x|} (1+o(1)).
\end{aligned}
\]
Finally, we can estimate the difference between the upper and lower bounds in~\eqref{e:implVolUpperAndLower}.
Since the functions $h_1$ and $h_{2,\eps}$ are of order $\sqrt{|x|}$ when $x \downarrow -\infty$, we get
\be \label{e:differenceH1H2}
\begin{aligned}
\sqrt 2 \sqrt{|x| + h_1(x)}
-
\sqrt 2 \sqrt{|x| + h_{2,\eps}(x)}
&=
\sqrt 2
\frac{h_1(x) - h_{2,\eps}(x)}
{\sqrt{|x| + h_1(x)}+\sqrt{|x| + h_{2,\eps}(x)} }
\\
&\sim
\frac{h_1(x) - h_{2,\eps}(x)}
{\sqrt{2|x|}}
\sim 
\frac{A_{\eps}}{|x|^{3/2}}
\qquad
\mbox{as } x \downarrow -\infty.
\end{aligned}
\ee

\emph{Step 2} (Final estimate of the implied volatility).
Using twice the Taylor expansion $\sqrt{1+y} = 1+\frac12 y -\frac18 y^2 + \Oo(y^3)$ for small~$y$, 
it follows from the definition of the function~$h_1$ that 
\be \label{e:expUpperBound}
\sqrt{|x| + h_1(x)}
= \sqrt{|x|} + \frac{u(x)}{\sqrt 2} 
+ \frac{u(x)^2}{4 \sqrt{|x|}}
- \frac{u(x)^4}{32 |x|^{3/2}}
+ o\left(|x|^{-3/2}\right)
\ee
(the term containing $u(x)^3$ disappears because of some cancellations).
Combining~\eqref{e:implVolUpperAndLower} and~\eqref{e:expUpperBound}, we obtain
\[
\sqrt T I(x) = 
\sqrt{|x|} + u(x)
+ \frac{u(x)^2}{2 \sqrt 2 \sqrt{|x|}}
- \frac{u(x)^4}{16 \sqrt 2 |x|^{3/2}}
+ \alpha(x)
+ o\left(|x|^{-3/2}\right),
\]
where the function $\alpha$ satisfies the following estimate: for every $\eps \maj 0$, $-(\sqrt 2 \sqrt{|x| + h_1(x)} - \sqrt 2 \sqrt{|x| + h_{2,\eps}(x)}) \le \alpha(x) \le 0$ for all $x \mino x_{\eps}$.
The final claim then follows by setting $a(x) = \alpha(x) |x|^{3/2}$ and using~\eqref{e:differenceH1H2}.
\end{proof}
\medskip

Define the function
\be \label{e:explicitExp}
\widetilde{I}(x) \equiv \sqrt{2 |x|}
 + \qq + \frac{\qq^2 + 2}{2\sqrt{2|x|}}
 + \frac{\qq}{4|x|},
\qquad \text{for }x < 0.
\ee

\begin{theorem} \label{t:IVArchil}
Assume $\pp \maj 0$ and Condition~\eqref{e:archilCdf}.
\begin{itemize}
\item[(i)] The implied volatility satisfies $\sqrt T I(x) = \widetilde I(x) + \Oo(|x|^{-3/2})$ as $x$ tends to $-\infty$;
\item[(ii)] if moreover 
\be \label{e:cdfLittleO}
F(K) - F(0) = o\left(|\log(K/S_0)|^{-3/2}\right),
\ee
for small~$K$, then the implied volatility satisfies
\begin{align} \label{eq:IVExpansion_up}
&\limsup_{x\downarrow-\infty}|x|^{3/2}\left(
\sqrt{T} I(x) - \widetilde{I}(x),
\right) \leq \mathfrak{c_q}
\\ \label{eq:IVExpansion_low}
&\liminf_{x\downarrow-\infty}|x|^{3/2}\left( \sqrt{T} I(x) - \widetilde{I}(x)\right) \geq 
\mathfrak{c_q} - A(\qq),
\end{align}
where $\mathfrak{c_q} = \frac{1-\qq^2}{12\sqrt{2}} - \frac{\qq^4}{16\sqrt{2}}$ and the constant $A(\qq)$ is defined in Theorem~\ref{thm:ExpansionIV}.

Finally, if there exists $\alpha>0$ such that
\be \label{e:cdfExactAsympt}
F(K) - F(0) \sim \alpha |\log(K/S_0)|^{-3/2},
\ee
as $K$ tends to zero, then~\eqref{eq:IVExpansion_up} and~\eqref{eq:IVExpansion_low} 
hold with $\mathfrak{c_q}$ replaced by $\mathfrak{c_q} + \alpha\sqrt{2 \pi}\E^{\qq^2/2}$.
\end{itemize}
\end{theorem}
\medskip

Theorem~\ref{t:IVArchil}(i) provides an explicit formula for the implied volatility with a $\Oo(|x|^{-3/2})$ 
error term, and~(ii) estimates this error term from above and below (under the slightly stronger 
condition~\eqref{e:cdfLittleO} or~\eqref{e:cdfExactAsympt}).
In particular, the following remarks are in order:

\begin{itemize}
\item 
Definition~\eqref{e:explicitExp} highlights the presence in the expansion of a term proportional to $|x|^{-1}$, which is hidden in Gulisashvili's formulation~\eqref{e:archilAtom}.
From the point of view of numerical evaluation, the terms in~\eqref{e:explicitExp} are elementary functions of the log-strike~$x$,  while the evaluation of~\eqref{e:archilAtom} requires to numerically invert the function $U(x,\cdot)$ for each value of~$x$.

\item If the underlying stock price is distributed according to the measure
\be \label{e:measure}
\mu(\D K) = \pp \delta_0(\D K) + (1-\pp) f(K) \D K,
\ee
where $f$ is some pdf on $(0,\infty)$ such that $f(K) = \Oo(K^{-a})$ for some $a<1$ as $K \downarrow 0$, 
then it is immediate that $ F(K)-F(0) = \Oo(K^{1-a})$, therefore Condition~\eqref{e:cdfLittleO} is fulfilled.\til
(Notice that, if $f(K) \sim c K^{-a}$ as $K\downarrow 0$ for some $c>0$, the restriction $a < 1$ is trivially necessary to ensure integrability).
Most of the financial models (with mass at zero) used in practice satisfy~\eqref{e:measure} with $f(K) = \Oo(K^{-a})$; in particular, the Merton model with jump-to-default in Section~\ref{s:merton}
(where the density $f$ tends to zero at the origin), and the CEV model in Section~\ref{s:cev} (where $f$ explodes at the origin). 
\end{itemize}

The constants $\mathfrak{c_q}$ and $A(\qq)$ in Theorem~\ref{t:IVArchil} are functions of $\qq^2 = (\mathcal{N}^{-1}(\pp))^2$, therefore--in terms of~$\pp$--they are symmetric around $\pp=1/2$. 
The function $A(\cdot)$ attains its minimum value for $\qq=0$ (or yet $\pp=1/2$), 
so that the bounds are the tightest at this point.
This is illustrated in Figure~\ref{f:upperLowerBoundsErrorTerm}.

\begin{figure}[t]
     \begin{center}
            \includegraphics[scale=0.4]{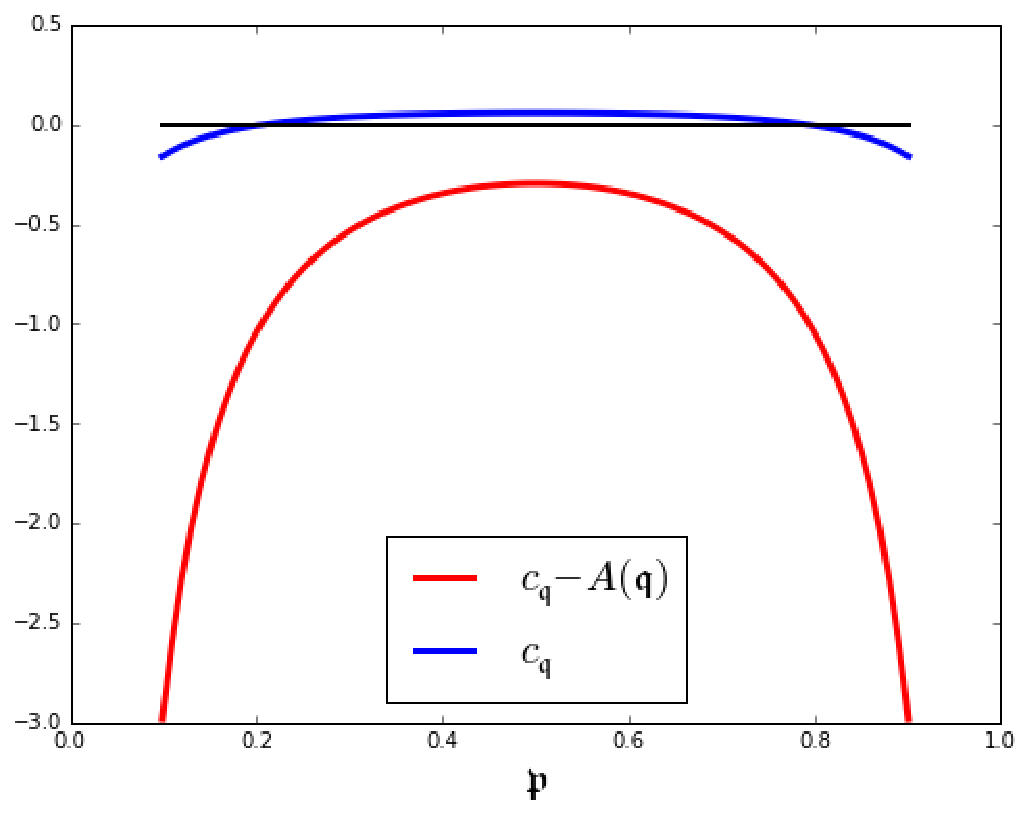}
    \end{center}
\caption{The upper and lower bounds $\mathfrak{c_q}$ and $\mathfrak{c_q}-A(\qq)$ in Theorem~\ref{t:IVArchil}, plotted in terms of the mass at zero $\pp = \mathcal{N}(\qq)$.}
\label{f:upperLowerBoundsErrorTerm}
\end{figure}

\begin{proof}
Consider the function $V(y,z) \equiv \Nn(z) - y n(z)$.
It is not difficult to see that, for every $y \in \R$, the equation $V(y,z) = \pp$ has a unique solution $z = \widetilde z(y)$ (take derivatives with respect to~$z$). 
In particular, $\widetilde z(0) = \Nn^{-1}(\pp) = \qq$. 
By the implicit function theorem, the function $\widetilde z$ is infinitely differentiable,
and we can easily compute its derivatives by implicit differentiation.
For the first derivative, $\widetilde{z}'(y) = (1 + y \widetilde z(y))^{-1}$,
which converges to~$1$ as~$y$ tends to zero;
iterating the procedure, we obtain the Taylor expansion
\[
\widetilde{z}(y) = \qq + y - \frac{\qq}2y^2 + 
\frac13 (\qq^2-1) y^3 + \Oo\left(|y|^4\right),
\qquad
\mbox{as } y \text{ tends to zero}.
\]
Since, for every $x \mino 0$, $U(x,\cdot)^{-1}(\pp) = \widetilde{z}(1/\sqrt{2|x|})$, we obtain the following expansion as~$x$ tends to $-\infty$:
\be \label{e:Uexpansion}
U(x,\cdot)^{-1}(\pp) = \qq + \frac{1}{\sqrt{2|x|}} - \frac{\qq}{4|x|}
 + \frac{\sqrt{2}}{12}\frac{\qq^2-1}{|x|^{3/2}} + \Oo\left(|x|^{-2}\right).
\ee
We have to estimate the difference $U(x,\cdot)^{-1}(\pp  + r(x)) - U(x,\cdot)^{-1}(\pp)$ as $x \downarrow -\infty$.
Applying the mean value theorem to the function $U(x,\cdot)$ and proceeding 
as in the proof of Theorem~\ref{thm:ExpansionIV}, we obtain
\be \label{e:Uestimate}
U(x,\cdot)^{-1}\left(\pp  + r(x) \right) - U(x,\cdot)^{-1}(\pp)
\sim
\frac{r(x)}{\phi(\qq)} = \sqrt{2 \pi} e^{\qq^2/2} r(x),
\qquad
\mbox{as } x \downarrow -\infty.
\ee
Using Lemma~\ref{l:remainderEstimate} and Condition~\eqref{e:archilCdf}, we have $r(x) = \Oo(|x|^{-3/2})$.
Now, plugging~\eqref{e:Uestimate} and~\eqref{e:Uexpansion} into Theorem~\ref{thm:ExpansionIV} and collecting terms of the same order in~$x$, we get
\[
\sqrt{T} I(x) = \widetilde{I}(x) + (\mathfrak{c_q} + a(x)) \frac{1}{|x|^{3/2}}
+ \sqrt{2 \pi} \E^{\qq^2/2} r(x) \left(1+o(1)\right)
+ \Oo \left(|x|^{-2}\right).
\]
Using the boundedness of the function $a(\cdot)$ at $-\infty$, we obtain~(i) under Condition~\eqref{e:archilCdf}.
If moreover Condition~\eqref{e:cdfLittleO} holds, the bounds~\eqref{eq:IVExpansion_up} 
and~\eqref{eq:IVExpansion_low} follow from Theorem~\ref{thm:ExpansionIV} 
and Lemma~\ref{l:remainderEstimate}.
Finally, when Condition~\eqref{e:cdfExactAsympt} holds, the last statement follows from $r(x) = \frac1{K} \int_0^{K} (F(y)-F(0))dy \sim F(K)-F(0) \sim \alpha |\log(K/S_0)|^{-3/2} = \alpha |x|^{-3/2}$ 
as~$x$ tends to~$-\infty$.
\end{proof}

\begin{remark}[\textbf{Asymptotic shapes of implied volatility in stochastic volatility models}] \label{r:stochVol}
The asymptotic expansion `leading-order term proportional to $\sqrt{|x|}$ + constant + vanishing term' 
is typical in stochastic volatility models.
Yet, in this case the phenomenon has a different nature: when the stock price follows an exponential (hence strictly positive) diffusion process with stochastic volatility, the functional form of the implied volatility at low strikes is rather determined by the asymptotics of the density of the asset price close to zero. 
In Theorem~\ref{t:IVArchil}, the same parametric form relates to the presence of an atom at zero, but is independent from the behaviour of the remaining distribution on $(0,\infty)$
(as soon as Condition~\eqref{e:archilCdf} is in force).
Some examples are:

\begin{enumerate}
\item the (uncorrelated) Stein-Stein model~\cite{SteinStein}. Gulisashvili and Stein~\cite[Theorem 3.1]{GulSt}
prove the following expansion for the implied volatility:
\[
\sqrt{T} I(x) = \gamma_1 \sqrt{x} + \gamma_2 +\Oo\bigl(|x|^{-1/2}\bigr)
\quad \mbox{as $x \uparrow +\infty$},
\]
where $\gamma_1 \in (0,\sqrt{2})$ and $\gamma_2 >0$ are constants that depend on the model parameters.
Since in uncorrelated volatility models the smile is symmetric, see~\cite{RenTouz}, 
the same expansion holds when~$x$ tends to~$-\infty$;

\item the Heston model~\cite{Hest}, for which Friz, Gerhold, Gulisashvili and Sturm~\cite[(4.11)]{Refined} 
prove the expansion:
\[
\sqrt{T} I(x)
= \rho_1 \sqrt{|x|} + \rho_2 + \rho_3 \frac{\log(|x|)}{\sqrt{|x|}}
+\Oo\bigl(|x|^{-1/2}\bigr),
\qquad\text{as }x\downarrow-\infty,
\]
where the coefficients $\rho_1 \in (0,\sqrt2)$, $\rho_2$ and~$\rho_3$ are related to the model parameters.
One can notice the appearance of the function $\log|x|$ in the third-order term;

\item the uncorrelated Hull-White model (Example~\ref{ex:HullWhite}), 
for which Gulisashvili and Stein~\cite{GulStHW} prove
\[
\sqrt{T} I(x) = \sqrt{2 |x|} - \frac{\log |x| + \log \log |x| }{2T\xi} + \Oo(1), \qquad\text{as } x\downarrow-\infty.
\]
The constant second-order term appearing in the previous expansions is replaced here by a term 
diverging to~$-\infty$, in agreement with Theorem~\ref{t:secondOrder}.
\end{enumerate}
\end{remark}


\section{Examples and numerics} \label{s:examples}

A distribution~$\mu$ with a mass $\pp \in (0,1)$ at zero can be written in terms of its Jordan decomposition
\be \label{e:distribAtom}
\mu(\D s) = \pp \delta_0(\D s) + (1-\pp) \tilde \mu(\D s),
\ee
where $\tilde \mu$ is a probability measure on $(0,\infty)$.
The martingale condition $\esp[S_T]=\int_{[0,\infty)} s \mu(\D s)=S_0$ imposes the constraint
$\int_{(0,\infty)} s \tilde \mu(\D s) = S_0/(1-\pp)$.
In order to illustrate Theorem~\ref{t:IVArchil}, we compute and plot the function
$J(x) \equiv I(x) \sqrt{\frac{T}{|x|}}$,
which must tend to $\sqrt2$ as $x \downarrow -\infty$, 
and compare it to the expansion~$\tilde J$ given by Theorem~\ref{t:IVArchil}:
\be \label{e:expansionPlot}
\tilde J(x) \equiv \displaystyle \frac1{\sqrt{|x|}} \left(
\sqrt{2|x|} + \qq + \frac{\qq^2 + 2}{2\sqrt{2|x|}}
+ \frac{\qq}{4|x|} \right)
\ee

\subsection{A toy example}\label{ex:affine}
We define a piecewise affine Call price on $[0,\infty)$ by setting
\be \label{e:affineCall}
\widetilde{C}(K) = (S_0 - (1-\pp)K)^+, \qquad K \ge 0.
\ee
The corresponding asset price distribution has the form~\eqref{e:distribAtom}, with $\tilde \mu(ds)=\delta_{S_0/(1-\pp)}(\D s)$.
The cumulative distribution of $\mu$, $F(K) = \mu([0,K]) = \pp + (1-\pp) \ind_{\{K \ge S_0/(1-\pp)\}}$, 
is constant for $K<S_0/(1-\pp)$, hence Condition~\eqref{e:archilCdf} is trivially satisfied.
Figure~\ref{f:smilesAffine} shows some numerical results for $T=1$.

\begin{figure}[!t]
     \begin{center}
        \subfigure[Normalised smile from affine Call price, $\pp=0.1$]{
        \label{fig:AffinePriceApproxP=0.1}
            \includegraphics[width=0.45\textwidth]{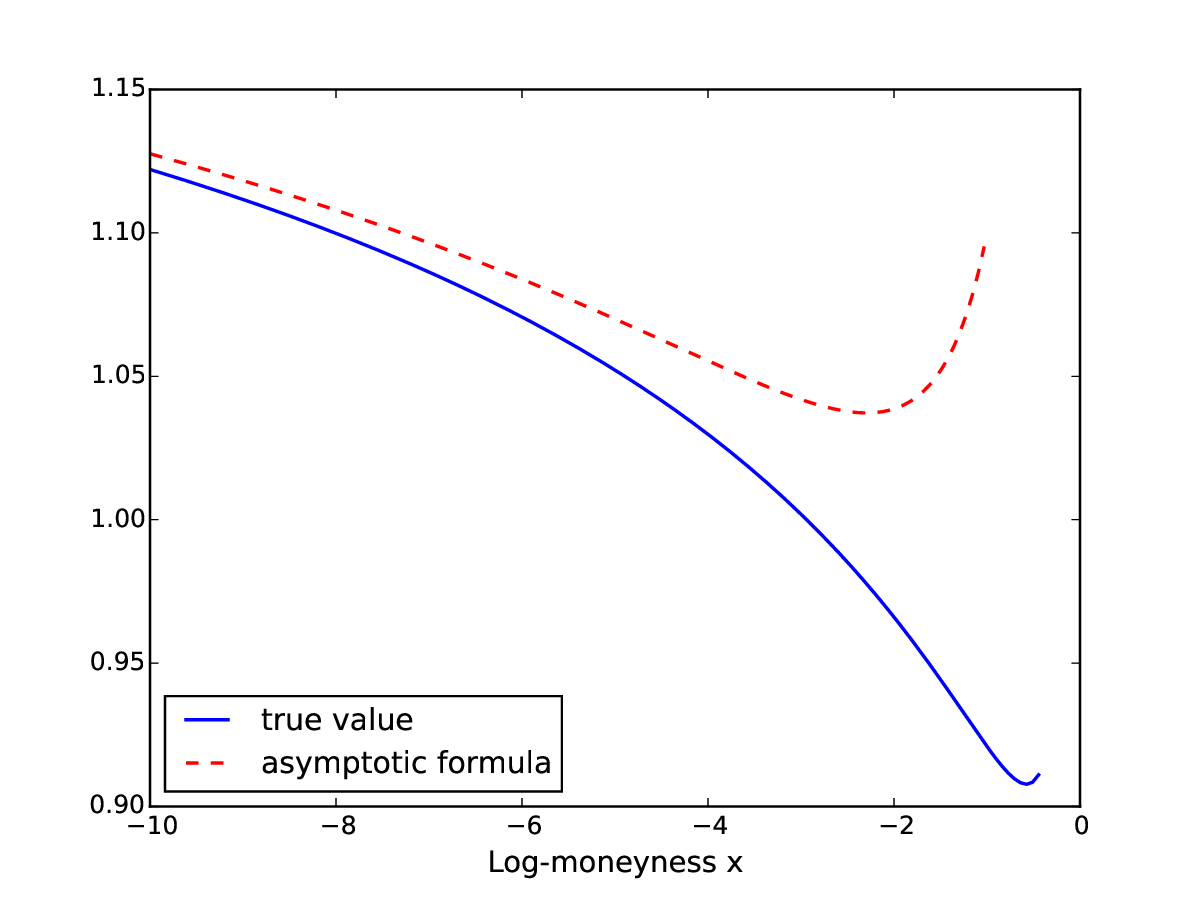}
            \graphicspath{ {./pics} }
        }
        \subfigure[Normalised smile from affine Call price, $\pp=0.9$]{
        \label{fig:AffinePriceApproxP=0.9}
           \includegraphics[width=0.45\textwidth]{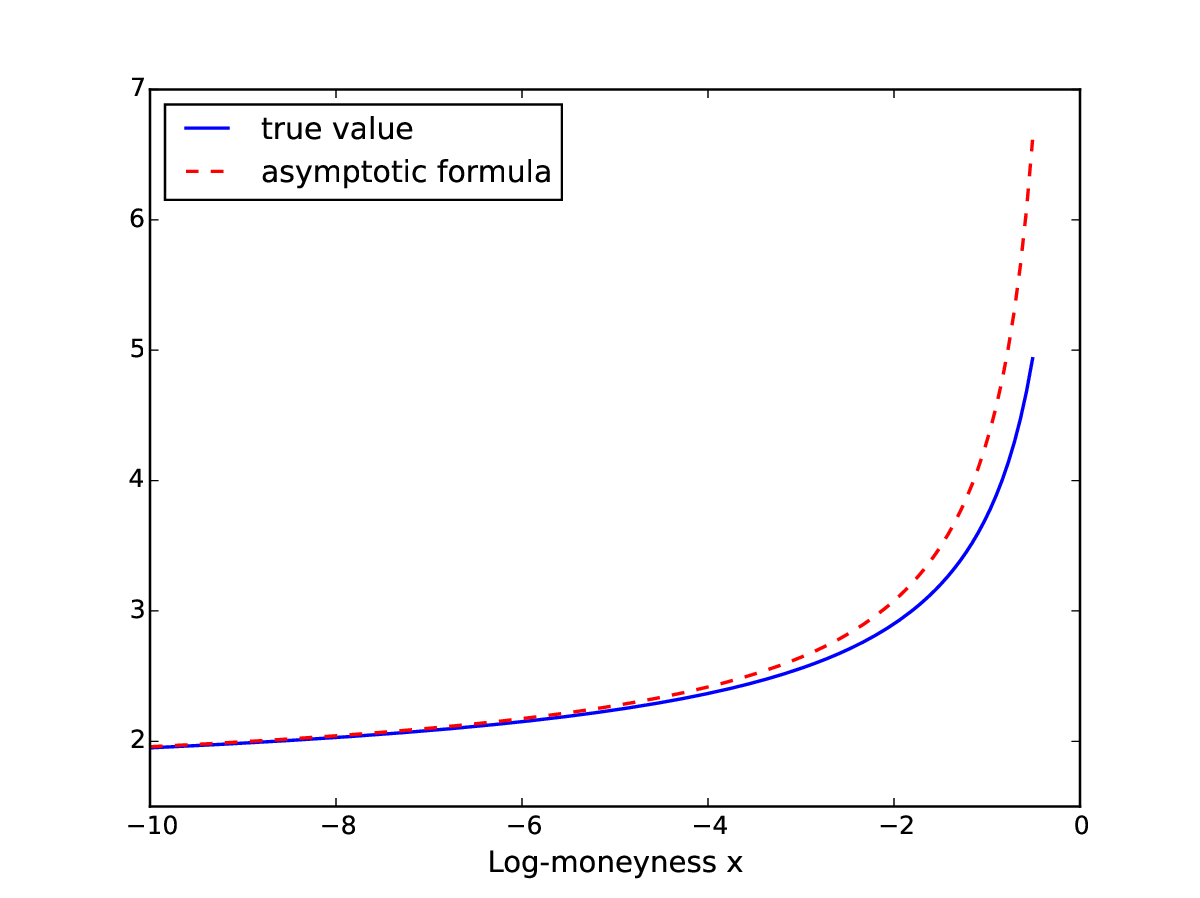}
           \graphicspath{ {./pics} }
        }\\ 
        \subfigure[Normalised smile from affine Call price, $\pp=0.5$]{
        \label{fig:AffinePriceApproxP=0.5}
           \includegraphics[width=0.45\textwidth]{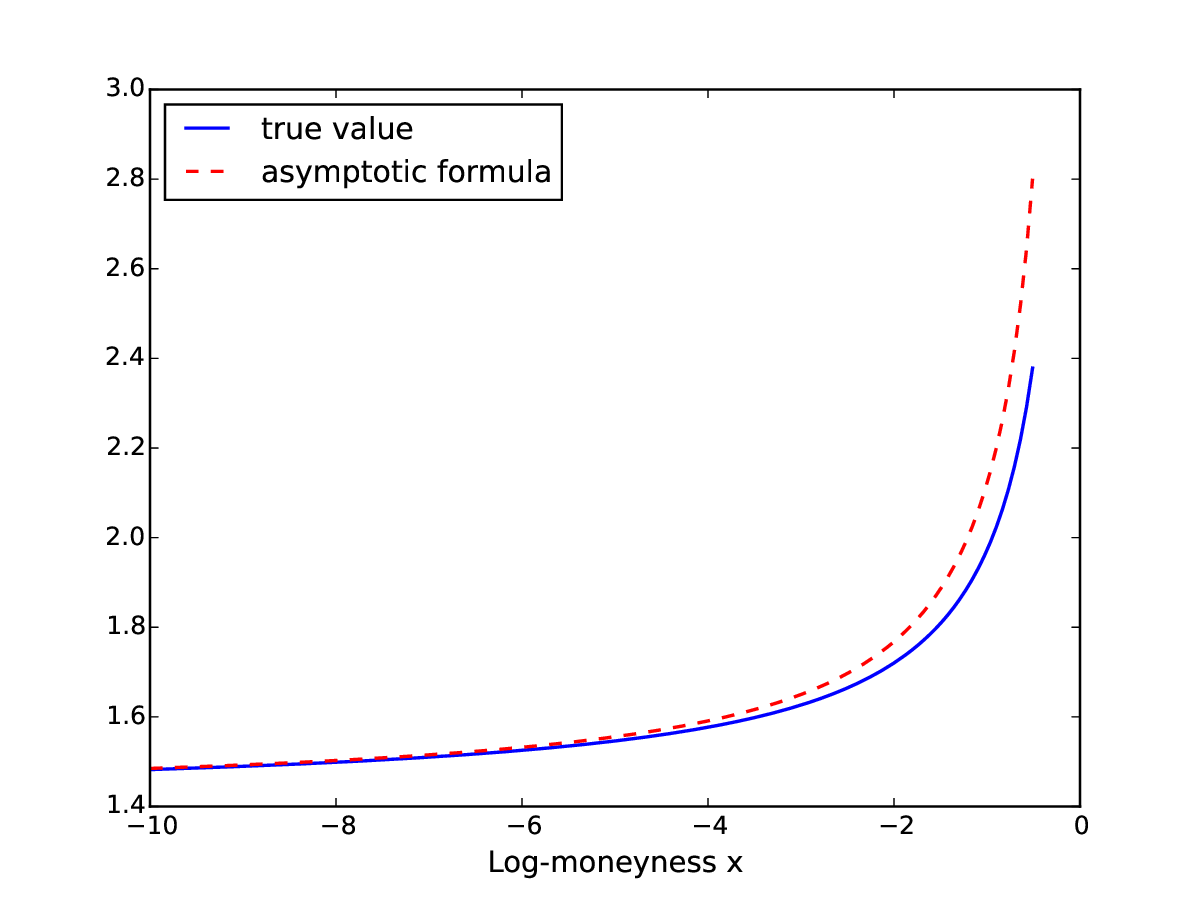}
           \graphicspath{ {./pics} }
        }
        \subfigure[Implied volatility smiles]{
        \label{fig:AffinePriceSmiles}
            \includegraphics[width=0.45\textwidth]{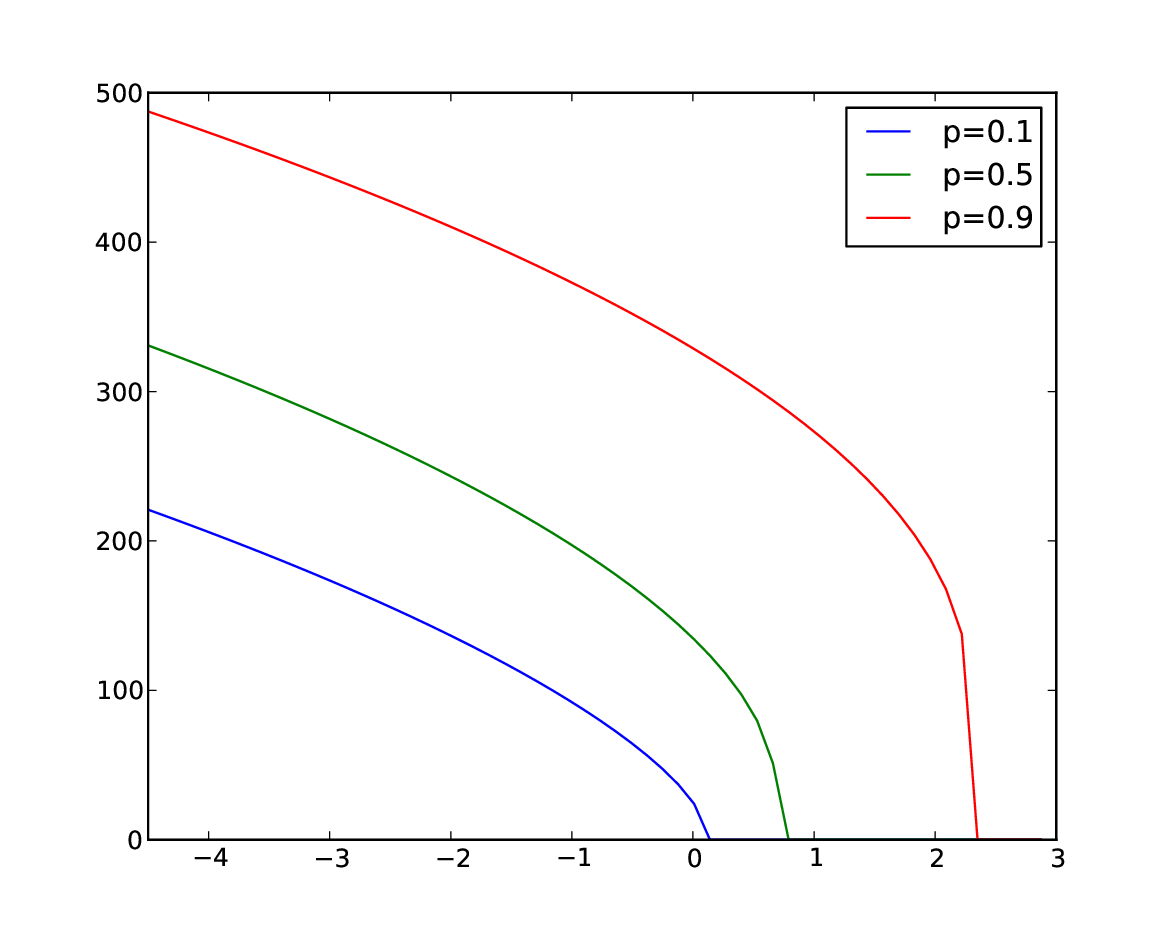}
            \graphicspath{ {./pics} }
        }\\ 
    \end{center}
		\caption{Implied volatilities generated by the affine Call price in Example~\ref{ex:affine}, $T=1$. 
(a)-(b)-(c): `normalised smile' $J(x) \equiv I(x) \sqrt{T/|x|}$ versus its approximation~$\tilde J(x)$ 
given in~\eqref{e:expansionPlot}.
Figure~(d): the corresponding implied volatilities (where the left derivative diverges at the upper bound of the support of the underlying).}
     \label{f:smilesAffine}
\end{figure}

\begin{figure}[!ht]
     \begin{center}
        \subfigure[Merton's model with mass $\pp$ at zero, $\pp=0.05$]{
        \label{fig:BlackScholesAtomSmiles}
            \includegraphics[width=0.45\textwidth]{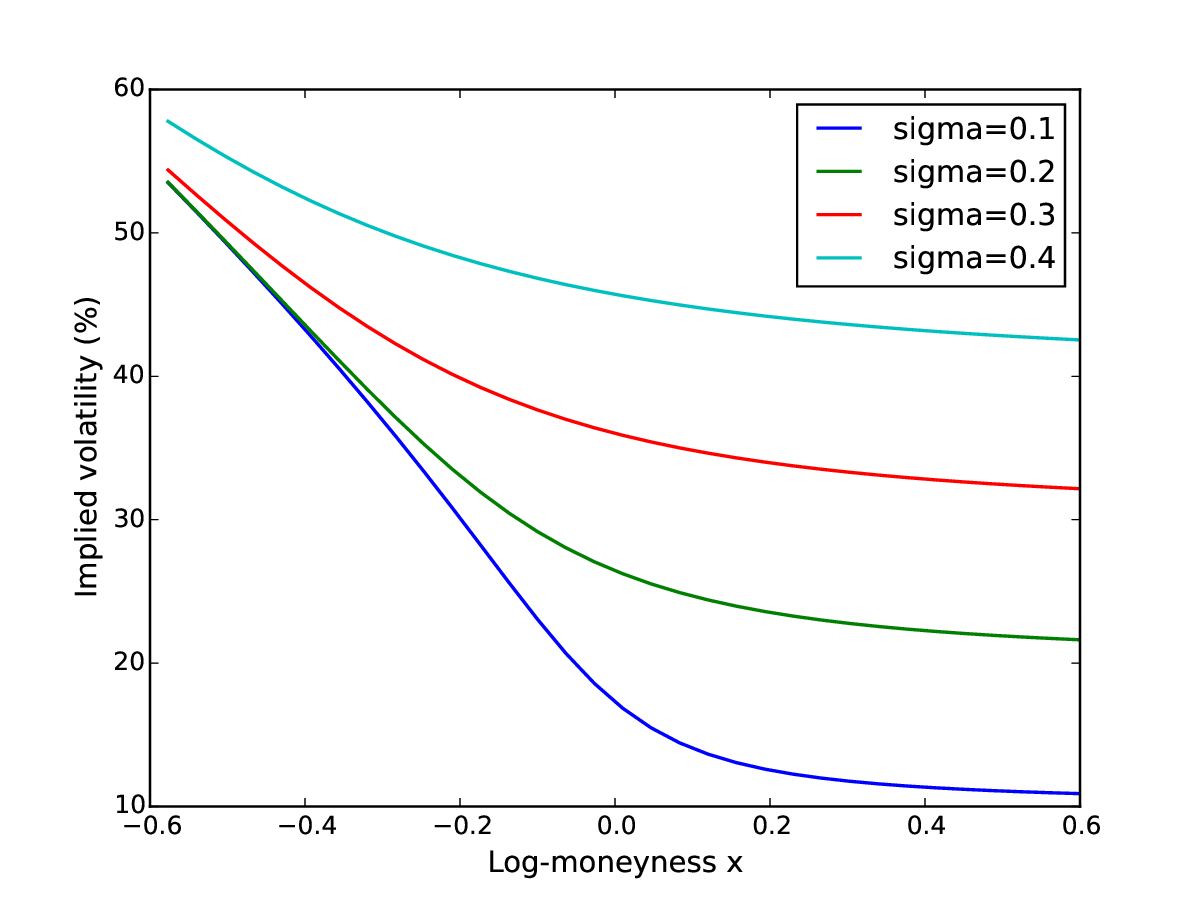}       
        }
        \subfigure[Merton's model with mass $\pp$ at zero, different $\pp$'s]{
        \label{third-BlackScholesAtomSmilesFixedSigma}
           \includegraphics[width=0.45\textwidth]{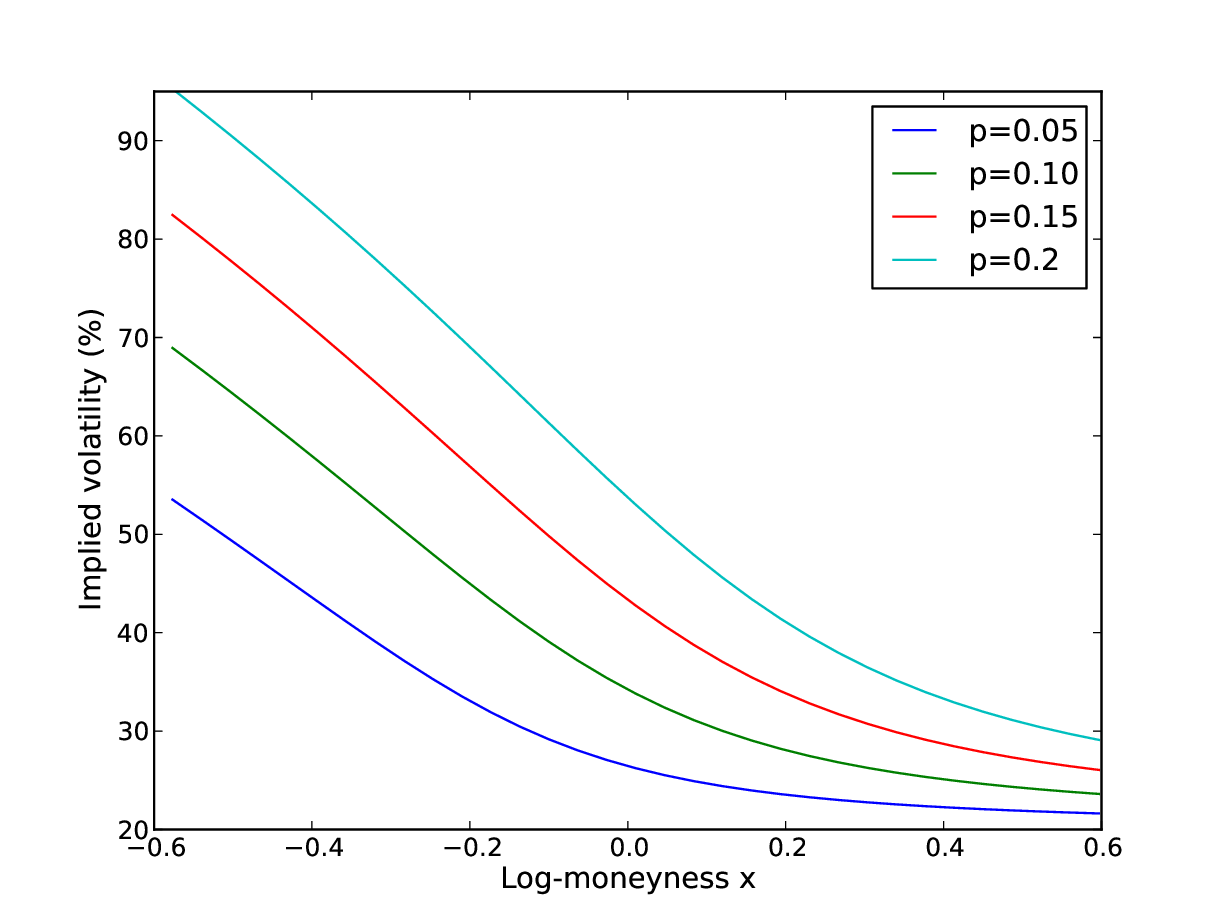}
        }\\ 
    \end{center}
		\caption{Implied volatility smiles in the Merton's model, or Black-Scholes distribution with mass $\pp$ at zero, with $S_0=1$, $T=1$. 
Figure (a): $\pp=0.05$, different values of $\sigma$. 
Figure (b): $\sigma=0.2$, different values of~$\pp$.} 
     \label{f:smilesBSAtom}
\end{figure}

\begin{figure}[t]
     \begin{center}
        \subfigure[Normalised smile, $\pp=0.1$]{
        \label{fig:BlackScholesAtomApprox01}
            \includegraphics[width=0.31\textwidth]{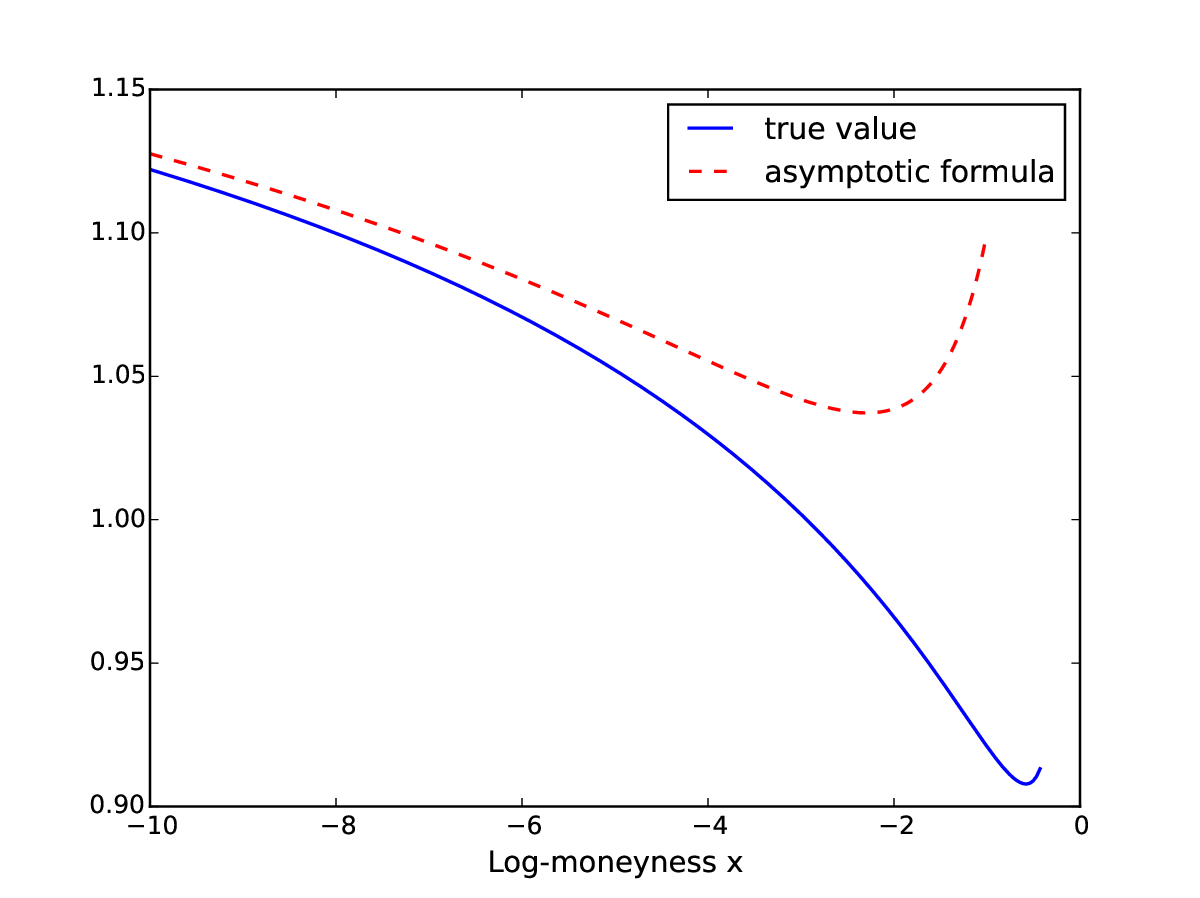}
        }
        \subfigure[Normalised smile, $\pp=0.5$]{
        \label{fig:BlackScholesAtomApprox05}
           \includegraphics[width=0.31\textwidth]{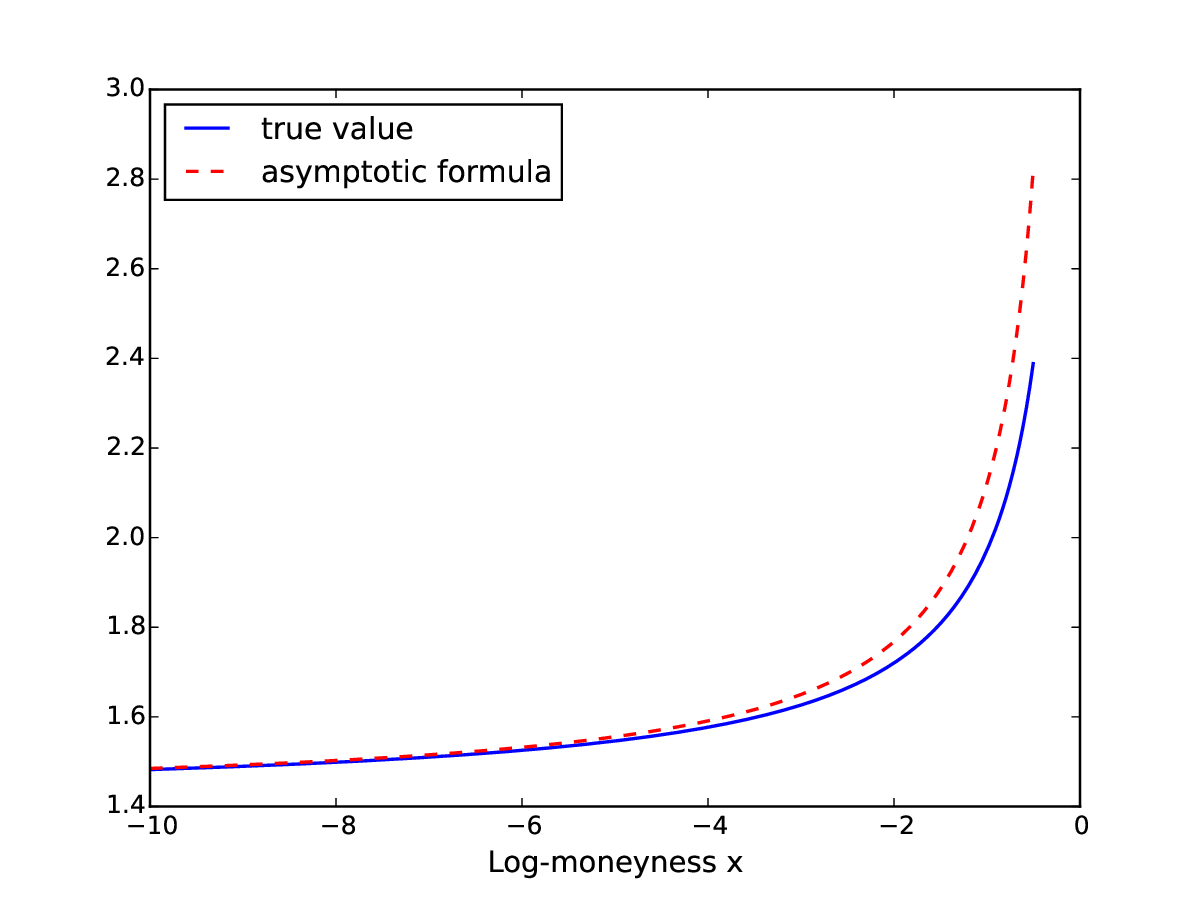}
        }
        \subfigure[Normalised smile, $\pp=0.9$]{
        \label{fig:BlackScholesAtomApprox09}
           \includegraphics[width=0.31\textwidth]{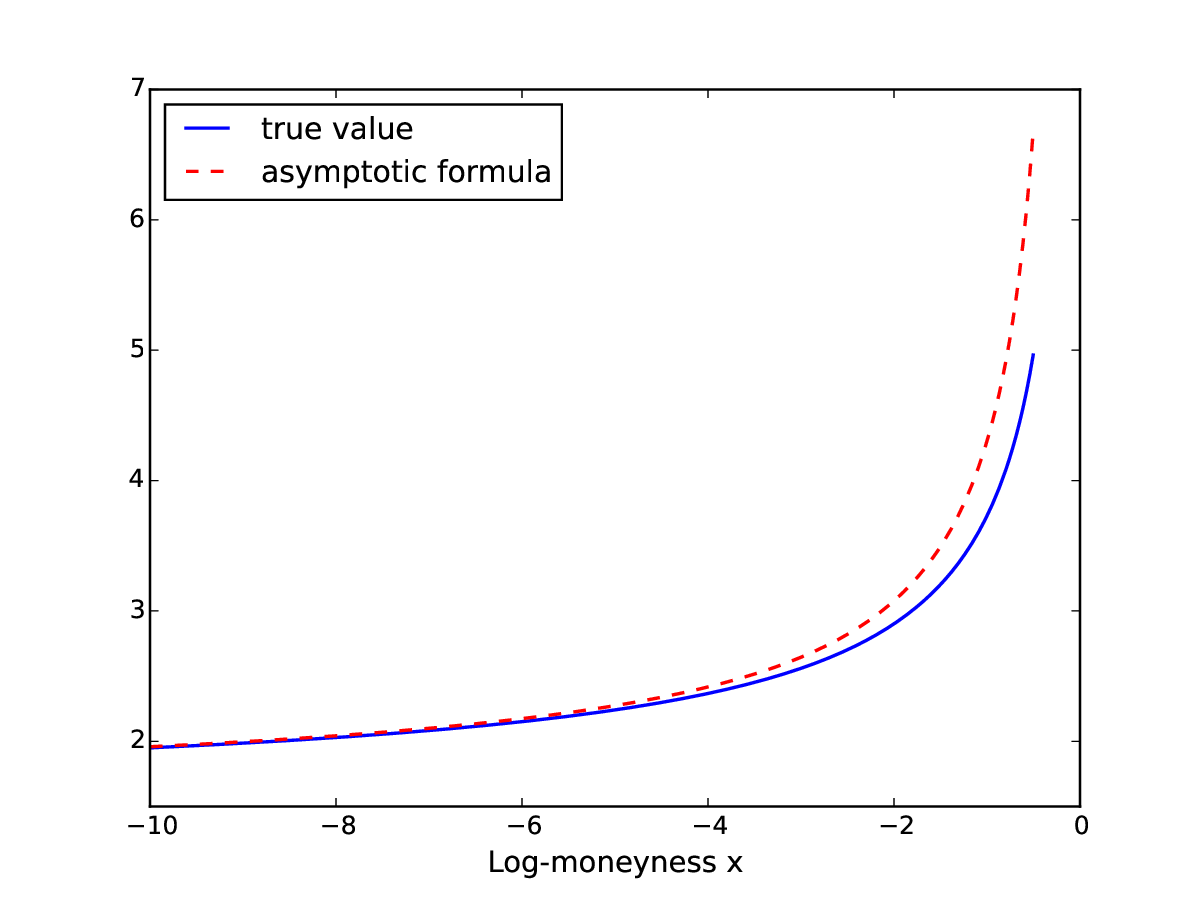}
        }\\ 
    \end{center}
\caption{Normalised implied volatility smiles $J(x) \equiv I(x) \sqrt{T/|x|}$ in the Merton's model with $\sigma=0.2$, $T=1$ and mass $\pp$ at zero. 
Comparison of the function $J$ with~$\tilde J$, see~\eqref{e:expansionPlot}.}
     \label{f:approxBSAtom}
\end{figure}

\subsection{Jump-to-default models} \label{s:j2d}
Let~$(\tilde{S}_t)_{t\geq 0}$ be a strictly positive process defined on~$(\Omega, \mathcal F, \Prob)$, and~$\tau$ a random time, independent of $\tilde{S}$.
Set
\be \label{e:j2d}
S_t = \tilde{S}_t \mathbf{1}_{\{t<\tau\}},
\ee
so that~$S$ jumps to zero at time $\tau$;
the law of~$S_T$ has the form~\eqref{e:distribAtom}, with $\pp=\Prob(\tau \le T)$, 
and~$\tilde \mu$ is the law of~$\tilde{S}_T$.

\subsubsection{Merton's model with jump-to-default} \label{s:merton0}

In the Merton model~\cite{Merton76optionpricing}, the process $\tilde{S}$ is a geometric Brownian motion with drift $\lambda > 0$: 
$\D \tilde{S}_t = \tilde{S}_t(\lambda \D t + \sigma \D W_t)$ with $\tilde{S}_0>0$, and~$\tau$ is exponentially distributed with parameter $\lambda$, so that $\pp=\Prob(\tau \le T)=1-\E^{-\lambda T}$.
Note that $\esp[S_T]= \esp[\tilde{S}_T   \ind_{\tau> T}] = \esp[\tilde{S}_T]   \Prob(\tau> T)   =S_0\E^{\lambda T}\E^{-\lambda T} = S_0$.
The continuous part of the distribution of~$S$ is $\tilde \mu(\D s) = f_{\BS}(s,S_0/(1-\pp),\sigma) \D s$,
where $f_{\BS}(\cdot,\overline{S},\sigma)$ is the density of a Black-Scholes stock price 
with mean~$\overline{S}$ and volatility $\sigma>0$.
The Put price written on~$S$ reads
$P(K) = \pp K + (1-\pp) P_{\BS}(K,S_0/(1-\pp),\sigma)$,
and the cumulative distribution
$$
F(K_x)  = \pp + (1-\pp) \Nn\left( -d_2\left(x+\log(1-\pp),\sigma\right)\right)
 =: \pp + (1-\pp) \Nn(d_{2,\pp}(x,\sigma)),
$$
where $d_{2,\pp}(x,\cdot) \equiv -d_{2}(x+\log(1-\pp), \cdot)$. 
Since $d_{2,\pp}(x,\sigma) \sim \frac{x}{\sigma\sqrt{T}}$ as $x \downarrow -\infty$ and $F(K_x)-F(0) = (1-\pp)\Nn(d_{2,\pp}(x,\sigma))$, 
the well-known bound $\mathcal{N}(d) \le \frac{n(d)}{|d|}$ for $d < 0$ yields
\[
F(K_x)-F(0) \le \frac{1-\pp}{|d_{2,\pp}(x,\sigma)|} n(d_{2,\pp}(x,\sigma))
 \le \frac{1-\pp}{\sqrt{2\pi}} \exp\left(-\frac{1}{2}d_{2,\pp}(x,\sigma)^2\right)
= \Oo\left(\exp\left(-\frac{x^2}{2 \sigma^2 T}\right) \right),
\]
and Condition~\eqref{e:archilCdf} is satisfied.
We illustrate the validity of~\eqref{e:explicitExp} in Figure~\ref{f:approxBSAtom}.
Let us briefly comment on Figures~\ref{f:smilesAffine}-\ref{f:approxBSAtom} related to the example~\eqref{e:affineCall} of an affine Call price
and to the Merton example above:
\begin{itemize}
\item[(i)] Interestingly, a log-normal distribution with a constant volatility parameter $\sigma$ and a mass $\pp$ at zero produces a very pronounced skew, even for relatively small values of~$\pp$
(Figure~\ref{f:smilesBSAtom}).
This is related to the impact of a mass at zero on the smile that we studied quantitatively 
in Theorem~\ref{t:impactOnSmile}(ii).
In analogy with displacement~\cite[Example 6.7]{CarrLee}, 
this is a way of generating a smile with only two parameters.

\item[(ii)] The different behaviours of the implied volatility foreseen by Corollary~\ref{c:secondOrder} (and by Theorem~\ref{t:IVArchil}) for $\pp \neq 1/2$ and for $\pp=1/2$ are confirmed in these examples 
(see also the CEV model in Section~\ref{s:cev}).
When $\pp=1/2$, the convergence of the normalised smile $I(x)\sqrt{T/|x|}$ to its limit~$\sqrt{2}$ 
has a considerably smaller bias than when~$\pp$ is close to one or to zero, 
for which the limiting value~$\sqrt{2}$ is still far in the left tail 
(Figures~\ref{fig:AffinePriceApproxP=0.1}-\ref{fig:AffinePriceApproxP=0.9} and~\ref{fig:BlackScholesAtomApprox01}-\ref{fig:BlackScholesAtomApprox09}).

\item[(iii)] The graphics~\ref{fig:AffinePriceApproxP=0.9},~\ref{fig:AffinePriceApproxP=0.5} and~\ref{fig:BlackScholesAtomApprox09},~\ref{fig:BlackScholesAtomApprox05} are almost identical. 
This provides evidence of the fact that the behaviour of the implied volatility for small strike is essentially determined by the mass of the atom at zero, while the remaining distribution on $(0,\infty)$ has little impact.
\end{itemize}

\subsubsection{Default probabilities from implied volatilities} \label{s:merton}
We consider here the same model as in Section~\ref{s:merton0}.
Ohsaki et al.~\cite{OOUY} study the possibility of measuring default probabilities from observed implied volatilities.
Considering a firm's asset following Merton's or CreditGrades~\cite{creditGrades} model, they estimate the survival probability at time $T$ based on the asymptotic formula 
$\lim_{x \downarrow -\infty} d_2(x) = -\qq$, computing~$d_2(x)$ from simulated smile data.
They give evidence of the difficulty of achieving a good estimate, due to the slow convergence of~$d_2$ to its limit.
For example, for a survival probability around $90\%$, the estimated value for the Merton model~\cite[Table 5]{OOUY} is affected by a relative error around $10\%$, even at extremely low strikes.

In Lemma~\ref{l:d2Estimate}, we account for the error term affecting this estimate, which is $\Oo(|x|^{-1/2})$ under Condition~\eqref{e:assumptionsCdf}(i).
Note however that Theorem~\ref{t:IVArchil} provides an alternative way of estimating default probabilities, 
which can be compared to the methodology in~\cite{OOUY}.
Neglecting the $\Oo(|x|^{-3/2})$ error term and inverting~\eqref{e:explicitExp} with respect to~$\qq$ 
yields the quadratic equation
$a(x) \qq^2 + b(x)\qq + c(x) = 0$,
with 
$a(x) \equiv 1/(2\sqrt{2|x|})$,
$b(x) = 1 + 1/(4|x|)$
 and $c(x) \equiv \sqrt{2|x|} -\sqrt T I(x) + 1/\sqrt{2|x|}$.
For $x$ small enough, the latter equation admits the two real roots
\be \label{e:solutionInversion}
\qq_{\pm}(x) = \frac{1}{\sqrt{2|x|}}\left\{-\frac{1}{2} - 2|x|
 \pm \sqrt{4|x|^{3/2}\left(I(x)\sqrt{2T}-\sqrt{|x|}\right)-2|x|+\frac{1}{4}}\right\}.
\ee
Using~\eqref{e:explicitExp}, it is not difficult to see that $\qq_{+}(x)$ converges to $\qq = \Nn^{-1}(\pp)$ while $\qq_{-}(x)$ diverges to infinity as $x \downarrow -\infty$, and hence $1-\Nn(\qq_+(x))$ is a convergent estimator of the survival probability $1-\pp$, independent of any parametric modelling choice.
\\
Table~\ref{table1} shows some numerical values for the Merton model.
The parameters are taken from~\cite{OOUY}:
\begin{equation}\label{param:Merton} 
\begin{aligned}
&S_0=100, \quad T=0.5, \quad \sigma=0.3,
\\
&\mbox{Set 1:} \quad \lambda=0.85 \qquad \mbox{Set 2:} \quad \lambda=0.15.
\end{aligned}
\end{equation}
For each parameter set, the first row shows the survival probability computed from the asymptotic formula
$\lim_{x \downarrow -\infty} d_2(x) = -\qq$, and provides the same values given in~\cite[Table 5]{OOUY}.
The second row shows the values of $1-\Nn(\qq_+(x))$.
The symbol `$-$' indicates that the estimator~$qq_+(x)$ is not defined 
(that is, the quadratic equation for~$\qq$ given above~\eqref{e:solutionInversion} does not have any solution).
The rightmost column contains the exact survival probability $1-\pp=\E^{-\lambda T}$.
The estimate based on the new formula~\eqref{e:explicitExp} proves to be more accurate: for example, for a moneyness ratio $K/S_0$ equal to $0.1$, the relative error is divided by three in the case of parameter Set 2 (roughly from $10\%$ to $3.5\%$), and divided by a factor $8$ (from $24\%$ to $3\%$) in the case of parameter Set 1.
Even if the fit is improved, the applicability of the model-free formula~\eqref{e:solutionInversion} for the estimation of default probabilities from market data can still be questioned. 
In this example, the relative error is below a few percents only for values of the strike/spot ratio outside the range usually observed in stock markets.

\begin{table}[ht]
	\centering
		\begin{tabular}{l c c c c c c c | r}
		\\ \hline \hline 
		Moneyness & 0.5 & 0.4 & 0.3 & 0.2 & 0.1 & 0.05 & 1e-10 & 0
		\\
		Log-moneyness $x$ & -0.69 & -0.91 & -1.20 & -1.61 & -2.30 & -3 & -23.02 & $-\infty$
		\\ [0.5ex] \hline 
		Survival Probability (\%): Set 1 
		\\ [0.5ex]  \hline
		\hspace{10mm} Ohsaki et al. & 42.13 & 43.76 & 45.43 & 47.23 & 49.44 & 51.01 & 59.80 & 65.37
		\\ [0.2ex]
		\hspace{10mm} $1-\Nn(\qq_+(x))$ & 71.43 & 70.26 & 69.23 & 68.29 & 67.37 & 66.86 & 65.48 & 65.37
		\\ [0.5ex] \hline \hline
		Survival Probability (\%): Set 2 
		\\ [0.5ex]  \hline
		\hspace{10mm} Ohsaki et al. & 75.59 & 77.77 & 79.72 & 81.59 & 83.59 & 84.86 & 90.35 & 92.77
		\\ [0.2ex]
		\hspace{10mm} $1-\Nn(\qq_+(x))$ & - & - & - & - & 96.15 & 95.04 & 92.90 & 92.77
		\\ [0.5ex] \hline \hline
		\end{tabular}
	\caption{Survival probabilities in Merton's jump-to-default model 
with the two parameters sets in~\eqref{param:Merton}.}
	\label{table1}
\end{table}

\begin{remark}
In view of~\eqref{e:putAsympt}, default probabilities could be estimated directly from Put prices by running a linear regression
$P(K) = \beta K+\varepsilon$ for small strikes, where the estimator for~$\beta$ would be an estimate
of the mass at the origin.
\end{remark}

\subsection{Diffusion processes absorbed at zero}

\subsubsection{The CEV process and comparison with Gulisashvili's formula~\cite{ArchilAtom}} \label{s:cev}

We consider here the CEV model, namely the unique strong solution to the stochastic differential equation
\begin{equation}\label{sdeCEV}
 \D S_t = \sigma S_t^{1+\beta}\D W_t,
\end{equation}
The process $(S_t)_{t\geq 0}$ is a true martingale~\cite[Chapter 6.4]{JYC2009} if and only if $\beta\leq 0$.
When $\beta=0$, the SDE~\eqref{sdeCEV} reduces to the Black-Scholes SDE, 
and the stock price remains strictly positive almost surely for all $t\geq 0$.
Following~\cite[Section 6.4]{JYC2009} we define a new process~$X$ 
by $X_t \equiv S_t^{-2\beta}/(\sigma^2 \beta^2)$ up to the first time~$S$ hits zero.
It\^o's formula yields
$\D X_t = \delta \D t + 2\sqrt{X_t}\D W_t$,
with $X_0 = S_0^{-2\beta}/(\sigma^2 \beta^2)>0$ and $\delta = 2+1/\beta$.
The process~$X$ is a Bessel process with~$\delta$ degrees of freedom 
and index $\nu \equiv \delta/2-1 = 1/(2\beta)$.
The Feller classification (see for example Karlin et al.~\cite[Chapter 15, Section 6]{Karlin} yields the following:
\begin{itemize}
\item if $\delta\leq 0$, i.e. $\beta\in [-1/2,0)$, the origin is an attainable and absorbing boundary.
For every $t>0$, the distribution $\mu_t$ of $X_t$ on $[0,\infty)$ has a positive mass at zero and admits a density on the positive real line:
\[
\mu_t(\D y) = \Prob(X_t=0) \delta_{0}(\D y) + f_{X_t}(X_0,y) \D y,
\]
with
$$
f_{X_t}(X_0,y) = \frac{1}{2t}\left(\frac{y}{X_0}\right)^{\nu/2}\exp\left(-\frac{X_0+y}{2t}\right)
I_{-\nu}\left(\frac{\sqrt{X_0 y}}{t}\right),
\qquad\text{for all } y>0,
$$
where $I_{-\nu}$ is the modified Bessel function of the first kind.
Note that
$\int_{0}^{\infty}f_{X_t}(X_0,y) \D y = \Gamma\left(-\nu, \frac{X_0}{2t}\right)<1$,
where $\Gamma$ is the normalised lower incomplete Gamma function
$\Gamma(v,z) \equiv \frac{1}{\Gamma(v)}\int_{0}^{z}u^{v-1}\E^{-u}\D u$,
therefore
$\Prob(X_t=0) = 1-\Gamma\left(-\nu,X_0/(2t)\right) > 0$;
\item if $\delta \in (0,2)$ ($\beta<-1/2$), the origin is attainable.
If $\delta >2$ ($\beta > 0$), the origin is not attainable. In both cases, $\Prob(X_t=0)=0$ for all $t$.
\end{itemize}

We can recast these results in terms of the original CEV process~$S$,
which hits zero if and only if the process~$X$ does.
In the case $\beta\in [-1/2,0)$, the density of $S_T$ on the positive real line is given by
$$
f_{S_T}(s)= 
-\frac{S_0^{1/2}s^{-2\beta-3/2}}{\sigma^2 \beta T}
\exp\left(-\frac{S_0^{-2\beta}+s^{-2\beta}}{2\sigma^2\beta^2 T}\right)
I_{-\nu}\left(\frac{S_0^{-\beta}s^{-\beta}}{\sigma^2 \beta^2 T}\right),
$$
for any $s>0$, and we further have
$\pp = \PP(S_T =0) = 1 - \Gamma\left(-\nu,(2\sigma^2\beta^2 T s_0^{2\beta})^{-1}\right)$.
Using the asymptotic form~\cite[Section 9.6.7]{AbrSteg} for the modified Bessel function 
$I_{\alpha}(z) \sim \Gamma(\alpha+1)^{-1}(z/2)^{\alpha}$ (as $z\downarrow 0$) for positive  $\alpha$, together with $-\nu=1/(2|\beta|)$, one obtains $f_{S_T}(s) \sim const \times s^{2|\beta|-1}$ as $s\downarrow 0$.
Therefore the density of the stock price explodes at the origin when $\beta \in (-1/2,0)$, and tends to a constant when $\beta=-1/2$, in contrast to the previous examples (where the density vanishes at the origin).
As pointed out in the discussion right after~\eqref{e:measure}, Condition~\eqref{e:archilCdf} on the cumulative distribution is satisfied here since $2|\beta|-1>-1$.
This CEV model can further be enhanced with an additional non-predictable independent jump-to-default, as done in~\cite{CampiSbuelz}.
This would result in augmenting the mass at zero and reducing the one on $(0,\infty)$, without affecting the shape of the density.

In order to test our results numerically, we need first to compute the price of European Put options,
which, for a maturity $T\geq 0$ and a strike $K\geq 0$, is given by
\be \label{e:cevPut}
P(K) = \mathbb{E}[(K-S_T)_+]
 = \pp K + \int_{(0,+\infty)}(K-s)^+ f_{S_T}(s)\D s.
\ee
We then provide a numerical comparison of our mass-at-zero approximation of the implied volatility smile to the true one.
More precisely, we compare the formulas~\eqref{e:explicitExp} and~\eqref{e:archilAtom} to
the true implied volatility smile computed from the direct integration formula~\eqref{e:cevPut}.
The `2-term approximation' in Figures~\ref{f:CEVPlotsSmallMass} and~\ref{f:SmilesMassPlot} 
refers to~\eqref{e:explicitExp} when considering the terms
up to order $|x|^{-1/2}$, 
and the `3-term approximation' corresponds to all the terms in the formula up to order $|x|^{-1}$.
We shall consider several cases, depending on the magnitude of the mass at zero.
In Figure~\ref{f:CEVPlotsSmallMass}, the mass at zero is small ($\pp\approx 1.47\%$), 
whereas a situation with large mass at zero ($\pp\approx 71.89\%$) can be observed in Figure~\ref{f:CEVPlotsBigMass}.
As one can see, our explicit asymptotic formula~\eqref{e:explicitExp} approaches quickly the non-explicit formula~\eqref{e:archilAtom}, providing a good approximation of the implied volatility smile for small log-moneyness.
Furthermore, in Figure~\ref{f:SmilesMassPlot}, we compare~\eqref{e:explicitExp} 
with Gulisashvili's~\eqref{e:archilAtom} as a function of the maturity of the option 
(for two different values of the log-moneyness~$x$).

\begin{figure}[!ht]
     \begin{center}
        \subfigure[Implied volatility smiles in the CEV model]{
            \includegraphics[height=4.5cm,width=0.48\textwidth]{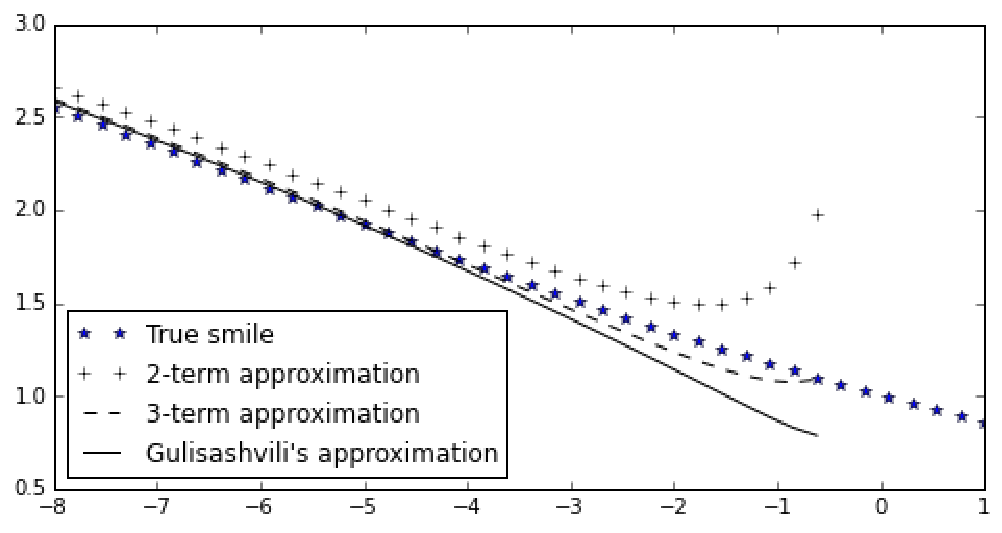}
        }
        \subfigure[Errors]{
           \includegraphics[height=4.5cm,width=0.48\textwidth]{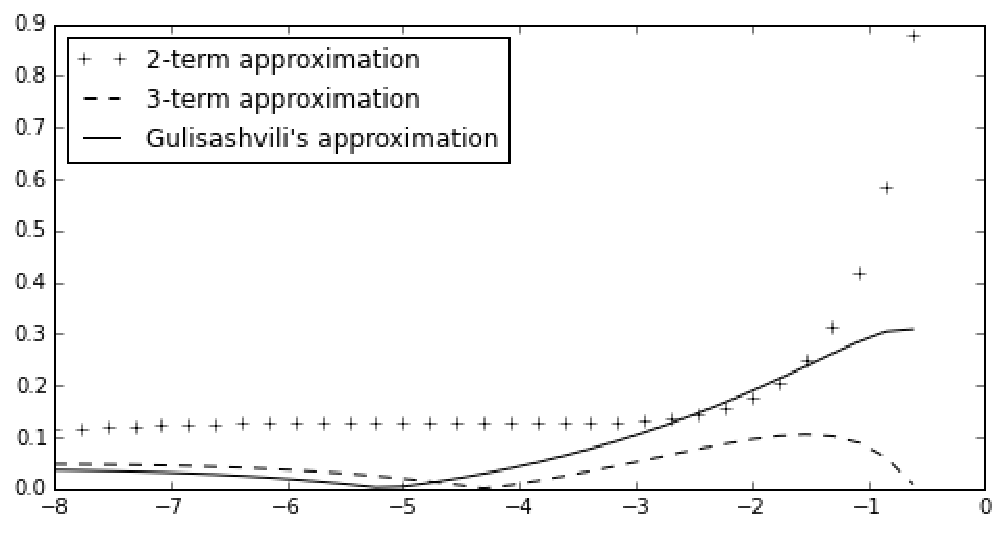}
        }
    \end{center}
\caption{Comparison of the implied volatility smiles in the CEV model.
Here, 
$s_0 = 0.1$, $T = 1$, $\beta = -0.3$, $\sigma = 0.5$,
giving a mass at zero $\pp\approx 1.47\%$.
Gulisashvili's approximation in solid line is~\eqref{e:archilAtom} and the dashed `3-term approximation' is the new formula~\eqref{e:explicitExp}.
}
     \label{f:CEVPlotsSmallMass}
\end{figure}

\begin{figure}[!ht]
     \begin{center}
        \subfigure[Implied volatility smiles in the CEV model]{
            \includegraphics[height=4.5cm,width=0.48\textwidth]{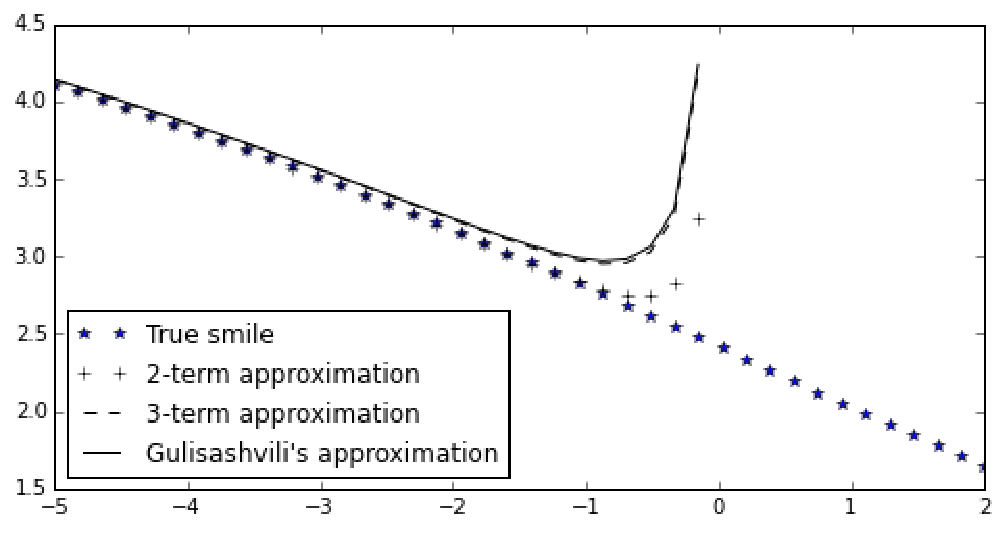}
        }
        \subfigure[Errors]{
           \includegraphics[height=4.5cm,width=0.48\textwidth]{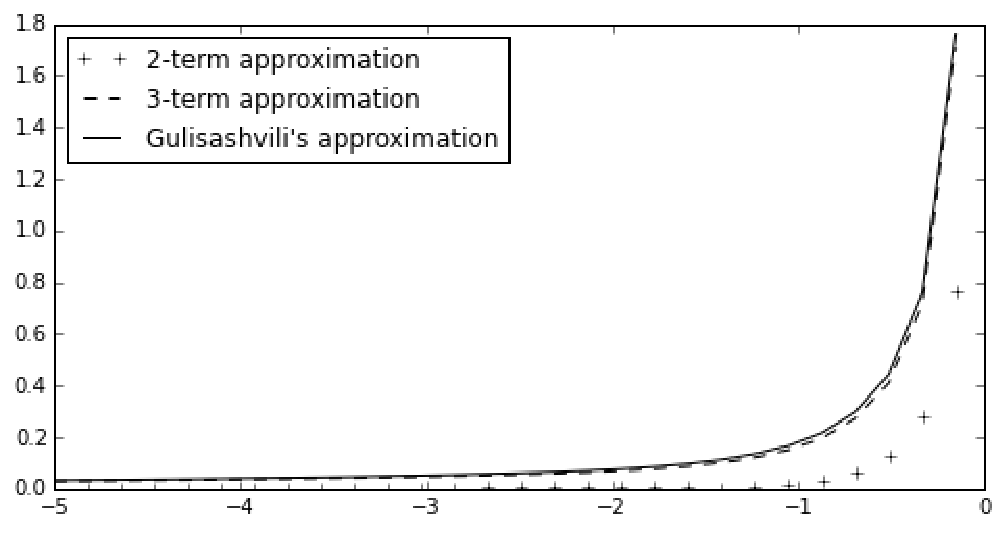}
        }
    \end{center}
\caption{Comparison of the implied volatility smiles in the CEV model.
Here, 
$s_0 = 0.1$, $T = 1$, $\beta = -0.4$, $\sigma = 1$,
giving a mass at zero $\pp\approx 71.89\%$.
Gulisashvili's approximation in solid line is~\eqref{e:archilAtom} and the dashed `3-term approximation' is the new formula~\eqref{e:explicitExp}.
}
     \label{f:CEVPlotsBigMass}
\end{figure}

\begin{figure}[!ht]
\centering
            \includegraphics[scale = 0.6]{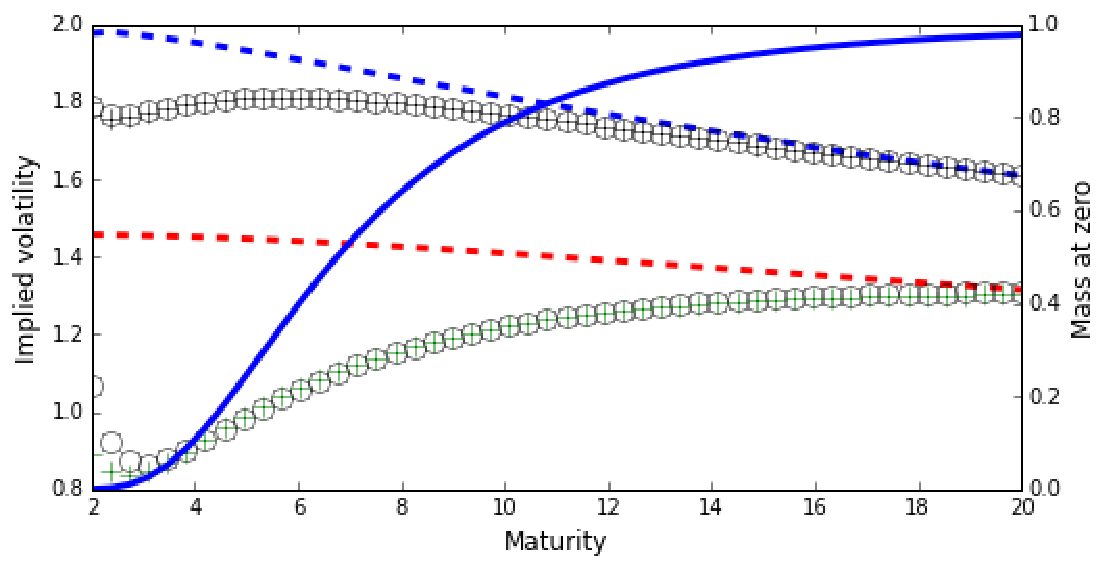}
\caption{Comparison of the implied volatility smiles in the CEV model.
Here, we take 
$s_0 = 0.1$, $\beta = -0.1$, $\sigma = 1$, and we let the maturity vary from two to twenty years.
The solid line (with values on the right vertical axis) represents the mass at zero.
The dashed lines, circles and crosses denote respectively the true smile, Gulisashvili's formula~\eqref{e:archilAtom}
 and our new approximation~\eqref{e:explicitExp} (all with values on the left vertical axis). 
The top three graphs corresponds to taking log-moneyness $x=-5$, and the bottom three $x=-2$.
}
     \label{f:SmilesMassPlot}
\end{figure}

\subsubsection{Absorbed Ornstein-Uhlenbeck process}

Another example of continuous asset price dynamics that accumulates mass at zero and allows for explicit formulae can be built from Ornstein-Uhlenbeck (OU) processes, namely
the unique strong solutions to the SDE
$\D \tilde{S}_t = -k \tilde{S}_t  \D t +  \sigma \D W_t$, with $\tilde{S}_0 = s_0 >0$ and $k,\sigma>0$.
Then $\tilde{S}_t=s_0 \exp(-kt) + \sigma \int_0^t  \exp(-k(t-u))  \D W_u$
 is a Gaussian process with mean
$ \mathbb{E}(\tilde{S}_t)= s_0 \exp(-kt)$
and covariance function
${\rm Cov}(\tilde{S}_t,\tilde{S}_s)=\frac{\sigma^2}{k}\exp(-kt) \sinh(ks)$.
The origin is attainable, and we define~$S$ as~$\tilde{S}$ stopped at the first time it hits zero:
let $\tau_0 \equiv \inf \{t \geq 0: \tilde{S}_t=0\}$, then
$S_t \equiv \tilde{S}_t \ind_{\{t < \tau_0\}}$.
For every $t>0$, the law of~$S_t$ has the form $\mu_t(\D y) = \Prob(S_t=0) \delta_{0}(\D y) + f_t(s_0,y) \D y$; from Borodin and Salminen~\cite{handbook}, we have
\begin{equation} \label{e:OUatom}
\pp = \Prob(S_t =0)  = \Prob (\tau_0 \leq  t) = 
2\Nn
\left(-\frac{s_0}{\sigma \sqrt{\E^{2kt}-1}}\right),
\end{equation}
and
\be \label{e:OUdensity}
f_{t}(s_0,y) = \Prob\Bigl(\tilde{S_t} \in \D y, \min_{0 \leq s \leq t }\tilde{S_s} >0 \Bigr) =
\frac{2\sinh\left(2y s_0 \E^{-kt}\right)}{\sqrt{2 \pi \sigma^2(1-\E^{-2kt})}}
\exp\left(-\frac{y^2 + s_0^2 \E^{-2kt}}{2 \sigma^2(1-\E^{-2kt})}\right),
\qquad\text{for all } y>0.
\ee
Note that $f_{t}(s_0,y) \sim c_t \: y$ as $y\downarrow 0$, so that $F(K)-F(0) \sim c_t K^2$ as $K\downarrow 0$, and Condition~\eqref{e:archilCdf} is satisfied.
Using~\eqref{e:OUatom} and~\eqref{e:OUdensity}, the numerical evaluation of European options is straightforward from numerical integration of~$f_t$ or from Monte-Carlo simulation of OU paths; for small log-moneyness, the shape of the implied volatility smile is again described by Theorem~\ref{t:IVArchil} (and will eventually, in the limit as~$x$ tends to ~$-\infty$, be similar to the smile of the CEV and the other jump-to-default models above).

\subsection{Some comments on smile parameterisations}

It is worth noticing that most of the recent literature on implied volatility parameterisations does not seem to take into account the possibility of having a mass at the origin.
We discuss two arbitrage-free examples: the SSVI model \cite{SSVI}, 
and the parameterisation proposed in Guo et al.~\cite{GJMN2012}.
First of all, note the following for the total implied variance $w(x) \equiv T I^2(x)$:
if $\qq = N^{-1}(\pp) \neq 0$, then~\eqref{e:secondOrderAtom} implies
\be \label{e:secondOrder_totalVariance}
 w(x) - 2|x| = \left( \sqrt{T} I(x) - \sqrt{2|x|} \right) \left(\sqrt{T} I(x) + \sqrt{2|x|} \right)
\to \mathrm{sgn}(\qq) \infty =
\left\{
\begin{array}{ll}
-\infty & \mbox{if } 0 \mino \pp \mino 1/2, \\
+\infty & \mbox{if } \pp \maj 1/2.
\end{array}
\right.
\ee

\subsubsection{SSVI}
Gatheral and Jacquier~\cite{SSVI} suggest to model the total implied variance with the following family of functions:
\begin{equation} \label{SSVI}
w_{\mathrm{SSVI}}(x) = \frac{\theta}2 \left(1 + \rho \varphi x + \sqrt{(\varphi x + \rho)^2 + 1 -\rho^2} \right),
\end{equation}
where $\theta > 0$, $\varphi \maj 0$ and $\rho \in (-1,1)$ are three parameters.
From~\cite[Theorem 4.2]{SSVI}, the parameterisation~\eqref{SSVI} is free of arbitrage 
(for a given maturity~$T$) if both conditions 
$\theta \varphi (1+|\rho|) < 4$ and $\theta \varphi^2 (1+|\rho|) \le 4$ are satisfied; 
moreover, the condition 
$\theta \varphi (1+|\rho|) \le 4$ is shown to be necessary~\cite[Lemma 4.2]{SSVI}.
It is straightforward to see that $w_{\mathrm{SSVI}}(x)/|x| \to \frac{\theta}2 \varphi (1 \pm \rho)$ as $x \to \pm \infty$.
In light of the necessary condition for no arbitrage $\theta \varphi (1+|\rho|) \le 4$, in order to have the maximal slope $\lim_{x \downarrow -\infty} w_{\mathrm{SSVI}}(x)/|x| = 2$ for the left wing, we need to impose
\[
\rho \le 0
\quad \mbox{ and } \quad
\theta \varphi (1+|\rho|) = 4.
\]
The second condition above indicates that we are on the boundary of the admissible parameter set.
The following argument is taken from \cite[Section 7.2]{DM-Martini}: assuming $\theta \varphi (1+|\rho|) = \theta \varphi (1 - \rho) = 4$, it is not difficult to see that the following expansion holds:
\be \label{e:SSVI_secondOrder}
w_{\mathrm{SSVI}}(x) - 2|x| =
\frac{\theta}{2}(1 - \rho) + \mathcal{O}\left(\frac{1}{|x|}\right) \to \frac{\theta}{2}(1 - \rho)
\qquad \mbox{as } x \downarrow -\infty.
\ee
The limit above contradicts both cases in \eqref{e:secondOrder_totalVariance}.
As a conclusion, a positive mass $\pp \neq 1/2$ \emph{cannot} be embedded into the SSVI parameterisation while keeping the no-arbitrage conditions.

\subsubsection{The parameterisation by Guo et al.~\cite{GJMN2012}}
The proposed parameterisation is 
$w(x) = \theta \, \Psi(x\xi(\theta))$, 
where 
\[
\xi(u) \equiv \alpha\frac{1-\E^{-u}}{u}
\qquad\text{and}\qquad
\Psi(z) \equiv |z| +\frac{1}{2}\left(1+\sqrt{1+|z|}\right).
\]
with $\alpha, \theta>0$.
Since the expansion $w(x) = \alpha|x|(1-\E^{-\theta}) + \frac{1}{2}\theta + \mathcal{O}(|x|^{-1/2})$ holds as $x$ tends to $-\infty$, the asymptotic slope $\lim_{x \downarrow -\infty} \frac{w(x)}{|x|}$ is equal to~$2$ if and only if
$\alpha(1-\E^{-\theta})=2$.
But this entails
$$
\lim_{x \downarrow -\infty}\left(w(x) - 2|x|\right)  = \frac{\theta}{2},
$$
again contradicting \eqref{e:secondOrder_totalVariance}, and
therefore ruling out the possibility of a mass at zero.

\appendix

\section{Appendix}

\subsection{Put-Call duality and smile symmetry}\label{s:symmetry} 

Fix some maturity $T>0$. The function 
\be \label{e:Gfunction}
G(K) \equiv \frac{K}{S_0} P\left(\frac{S_0^2}{K}\right), \qquad K > 0,
\ee
allows to define a Black-Scholes implied volatility function $I_{G}$, when $G$ is taken as a Call price with maturity~$T$.
The identity
\be \label{e:GimpliedVol} 
I_{C}(K) = I_{G}\left(\frac{S_0^2}K\right)
\ee
is proven and used in~\cite{GulForm,GulIJTAF} to transfer the asymptotic results initially formulated for the right part of the implied volatility smile ($K \uparrow \infty$) to the left part ($K \downarrow 0$).

\begin{proposition}  \label{e:GisNotACall}
When $\pp>0$, the function $K \mapsto G(K)$ defined in~\eqref{e:Gfunction} is not a Call price function.
\end{proposition}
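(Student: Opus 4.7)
The plan is to identify a necessary property of call price functions that $G$ violates when $\pp>0$.  The cleanest such property is the behaviour at infinite strike: for any call price function $C(\cdot,T) = \esp[(S-\cdot)^+]$ with integrable underlying, dominated convergence forces $\lim_{K\uparrow\infty} C(K,T)=0$.  So it suffices to compute $\lim_{K\uparrow\infty} G(K,T)$ and show it is strictly positive.

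First I would perform the change of variable $L := S_0^2/K$, so that $K\uparrow\infty$ corresponds to $L\downarrow 0$, and rewrite
\[
G(K,T) \;=\; \frac{K}{S_0}\,P\!\left(\frac{S_0^2}{K},T\right) \;=\; S_0 \,\frac{P(L,T)}{L}.
\]
Then I would invoke the Breeden--Litzenberger limit~\eqref{e:putAsympt}, namely $\lim_{L\downarrow 0} P(L,T)/L = \Prob(S_T=0) = \pp$, to conclude
\[
\lim_{K\uparrow\infty} G(K,T) \;=\; S_0\,\pp.
\]
When $\pp>0$ this limit is strictly positive, which contradicts the necessary asymptotic behaviour of any call price function at infinite strike.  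Therefore $G(\cdot,T)$ cannot be a call price function.

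There is no real obstacle here: the only subtlety is to be clear about what we mean by ``call price function'' (a European call on an integrable non-negative underlying with some spot, for maturity $T$), so that the $K\uparrow\infty$ limit is unambiguously zero.  I would include a brief sentence to that effect before the computation.  It is worth remarking, as a complement, that this obstruction is precisely the reason why the symmetry identity~\eqref{e:GimpliedVol} breaks down in the presence of a mass at zero: applying the Black--Scholes inversion to $G(\cdot,T)$ would require its right-tail to vanish, and the residual $S_0 \pp$ is exactly the obstruction measured by Theorem~\ref{t:theoremIntro}.
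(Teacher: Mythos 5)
Your proposal is correct and follows essentially the same route as the paper's own proof: both compute $\lim_{K\uparrow\infty} G(K,T) = \pp S_0 > 0$ via the change of variable $L = S_0^2/K$ and the Breeden--Litzenberger limit~\eqref{e:putAsympt}, and contrast this with the fact that any call price function on an integrable underlying must vanish at infinite strike by dominated convergence. No gaps; the argument matches the paper's.
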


\begin{proof}
Assume~$G(\cdot)$ is a Call price function with maturity~$T$, 
then $G(K)=\esp(X-K)^+$ for some integrable random variable~$X$.
Equation~\eqref{e:putAsympt} implies
\[
\lim_{K \uparrow \infty} G(K)
=
\lim_{K \uparrow \infty} \frac{K}{S_0} P\left(\frac{S_0^2}{K}\right)
=
\lim_{K' \downarrow 0} \frac{S_0}{K'} P(K')
= \pp S_0 > 0,
\]
which contradicts $\lim_{K \uparrow \infty} G(K)=\lim_{K \uparrow \infty} \esp(X-K)^+ = 0$ by dominated convergence.
\end{proof}
\medskip

The situation where~$G$ is a genuine Call price function, and moreover $G \equiv C$, is related to a symmetry of the underlying law.
Denote by $\Q$ the probability measure defined by the Radon-Nikodym density $\D\Q/ \D\Prob = S_T/S_0$.
The distribution of~$S_T$ is said to be geometrically symmetric if the distribution of~$S_0/S_T$ 
under~$\Q$ is the same as the distribution of~$S_T/S_0$ under~$\Prob$ 
(see Carr and Lee~\cite{CarrLee}).
Examples include the log-normal distribution and uncorrelated stochastic volatility models (with zero risk-free rate).
It is easy to see~\cite[Theorem~2.2 and Corollary~2.5]{CarrLee} that geometric symmetry implies 
(and indeed is equivalent to) the Put-Call price symmetry
\be \label{e:pcSymm}
C(K) = G(K)
\ee
with $G$ defined in~\eqref{e:Gfunction}.
Note that~\eqref{e:pcSymm} can be also written in the more `symmetric' fashion $P(K,S_0)=C(S_0,K)$, making the spot price appear explicitly.
Equation~\eqref{e:GimpliedVol} shows that Put-Call symmetry is in turn equivalent to the symmetry of the implied volatility smile with respect to the log-moneyness
\be \label{e:symm}
I(x) = I(-x), \qquad \text{for all } x \in \R.
\ee
The equivalence of~\eqref{e:symm} and~\eqref{e:pcSymm} gives the following corollary to Proposition~\ref{e:GisNotACall}:

\begin{corollary} \label{c:noSym}
If $\pp>0$, the implied volatility at expiry $T$ cannot be symmetric in the sense of~\eqref{e:symm}.
\end{corollary}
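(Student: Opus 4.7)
The plan is to derive a contradiction from the chain of equivalences that the excerpt has already assembled. Suppose, for contradiction, that $\pp>0$ and that the implied volatility is symmetric, i.e.\ $I(x)=I(-x)$ for all $x\in\R$. The excerpt recalls (via Carr--Lee) that such symmetry of the smile is equivalent to the Put-Call symmetry $C(K,T)=G(K,T)$ for all $K>0$, where $G$ is defined by~\eqref{e:Gfunction}. In particular, under the assumption of symmetry, $G(\cdot,T)$ coincides with the Call price function $C(\cdot,T)$, so $G(\cdot,T)$ is itself a Call price function (of some integrable nonnegative random variable, namely $S_T$).

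But this is precisely what Proposition~\ref{e:GisNotACall} rules out when $\pp>0$: there it is shown, using $\lim_{K\downarrow 0}P(K,T)/K=\pp$ from~\eqref{e:putAsympt}, that $G(K,T)$ does not decay to zero as $K\uparrow\infty$ (in fact it tends to $\pp S_0>0$), whereas any genuine Call price function must vanish at infinity by dominated convergence. This contradicts the previous paragraph, so the assumption $I(x)=I(-x)$ cannot hold, and the corollary follows.

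I expect no serious obstacle here, since the nontrivial work has already been carried out in Proposition~\ref{e:GisNotACall} and in the discussion identifying~\eqref{e:symm} with~\eqref{e:pcSymm}. The only care needed is to state cleanly that smile symmetry forces $G(\cdot,T)\equiv C(\cdot,T)$ (so in particular $G$ inherits the property of being a Call price function), and then to invoke Proposition~\ref{e:GisNotACall} directly to close the argument.
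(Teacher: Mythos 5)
Your argument is correct and is exactly the paper's route: the paper derives the corollary by combining the equivalence of smile symmetry~\eqref{e:symm} with Put-Call symmetry~\eqref{e:pcSymm} (via~\eqref{e:GimpliedVol}) and then invoking Proposition~\ref{e:GisNotACall}, which is precisely your contradiction argument. Nothing is missing.
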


\begin{remark}
Note that $\Q(S_T>0) = \esp_{\Prob}\left[(S_T/S_0) \ind_{S_T>0} \right] = 1$, therefore
$S_0/S_T$ is $\Q$-almost surely well-defined also when the $\Prob$-distribution of $S_T$ has an atom at zero.
However, since $\Q\left(S_0/S_T > 0\right)=1$, the $\Q$-distribution of $S_0/S_T$ cannot coincide with the $\Prob$-distribution of $S_T/S_0$ in this case.
This is another way of showing that geometric symmetry, hence symmetry of the smile, 
does not hold when the $\Prob$-distribution has an atom at zero.
\end{remark}

\begin{remark}
In~\cite[Theorem 4.1]{Lee}, assuming $\pp = 0$, Lee proves the identity $I^{\Prob}(x)=I^{\Q}(-x)$, 
where~$I^{\Q}$ denotes the implied volatility of options written on $S_0/S_T$ 
and priced under the measure~$\Q$.
Although both functions~$I^{\Prob}$ and~$I^{\Q}$ are well defined for any stock price distribution that is non-negative under~$\Prob$, the same argument used in the proof of Proposition~\ref{e:GisNotACall} shows that the identity $I^{\Prob}(x)=I^{\Q}(-x)$ does not hold when $\pp>0$.
\end{remark}


\bibliographystyle{siam}
\bibliography{References}

\begin{thebibliography}{10}

\bibitem{AbrSteg}
{\sc M.~Abramowitz and I.~Stegun}, {\em Handbook of Mathematical Functions},
  Dover, New York, 10~ed., 1972.

\bibitem{AAnormCdf}
{\sc K.~Aludaat and M.~Alodat}, {\em A note on approximating the normal
  distribution function}, Applied Mathematical Sciences, 2 (2008),
  pp.~425--429.

\bibitem{BenFr}
{\sc S.~Benaim and P.~Friz}, {\em Regular variation and smile asymptotics},
  Mathematical Finance, 19 (2009), pp.~1--12.

\bibitem{handbook}
{\sc A.~Borodin and P.~Salminen}, {\em Handbook of Brownian Motion. Facts and
  Formulae}, Birkh\"auser, Basel, 1996.

\bibitem{CampiSbuelz}
{\sc L.~Campi, S.~Polbennikov, and A.~Sbuelz}, {\em Systematic equity-based
  credit risk: A {CEV} model with jump to default}, Journal of Economic
  Dynamics and Control, 33 (2009), pp.~93--108.

\bibitem{CarrLee}
{\sc P.~Carr and R.~Lee}, {\em Put-call symmetry : Extensions and
  applications}, Mathematical Finance, 19 (2009), pp.~523--560.

\bibitem{JYC2009}
{\sc M.~Chesney, M.~Jeanblanc, and M.~Yor}, {\em Mathematical methods for
  financial markets}, Springer, 2009.

\bibitem{Coc}
{\sc D.~Coculescu, H.~Geman, and M.~Jeanblanc}, {\em Valuation of default
  sensitive claims under imperfect information}, Finance and Stochastics, 12
  (2008), pp.~195--218.

\bibitem{Colgold}
{\sc P.~Collin-Dufresne and R.~Goldstein}, {\em Do credit spreads reflect
  stationary leverage ratios?}, The Journal of Finance, 56 (2001),
  pp.~1929--1957.

\bibitem{DM-Martini}
{\sc S.~{De Marco} and C.~Martini}, {\em Moment generating functions and
  {N}ormalized implied volatilities: unification and extension via {F}ukasawa's
  pricing formula}.
\newblock Preprint arXiv:1703.00957, 2017.

\bibitem{Derman}
{\sc E.~Derman, K.~Demeter, M.~Kamal, and J.~Zou}, {\em A guide to volatility
  and variance swaps}, Journal of Derivatives, 6 (4) (1999), pp.~9--32.

\bibitem{creditGrades}
{\sc V.~Finkelstein, G.~Pan, J.~P. Lardy, and J.~Tiemey}, {\em Creditgrades
  {T}echnical {D}ocument}.
\newblock Editor Christopher C. Finger, RiskMetrics Group, 2002.

\bibitem{Refined}
{\sc P.~Friz, S.~Gerhold, A.~Gulisashvili, and S.~Sturm}, {\em On refined
  volatility smile expansion in the {H}eston model}, Quantitative Finance, 11
  (2011), pp.~1151--1164.

\bibitem{Fukasawa}
{\sc M.~Fukasawa}, {\em Normalization for implied volatility}.
\newblock Available on ArXiv \url{https://arxiv.org/abs/1008.5055}. Preprint
  version of M. Fukasawa, The Normalizing Transformation of the implied
  volatility smile, Mathematical Finance, Vol. 22, No. 2 (2012), 2010.

\bibitem{GaoLee}
{\sc K.~Gao and R.~Lee}, {\em Asymptotics of implied volatility to arbitrary
  order}, Finance and Stochastics, 18(2) (2014), pp.~349--392.
\newblock Finance and Stochastics.

\bibitem{SSVI}
{\sc J.~Gatheral and A.~Jacquier}, {\em Arbitrage-free {SVI} volatility
  surfaces}, Quantitative Finance, 14 (2014), pp.~59--71.

\bibitem{GulForm}
{\sc A.~Gulisashvili}, {\em Asymptotic formulas with error estimates for call
  pricing functions and the implied volatility at extreme strikes}, SIAM
  Journal on Financial Mathematics, 1 (2010), pp.~609--641.

\bibitem{GulIJTAF}
\leavevmode\vrule height 2pt depth -1.6pt width 23pt, {\em Asymptotic
  equivalence in {L}ee's moment formulas for the implied volatility, asset
  price models without moment explosions, and {P}iterbarg's conjecture},
  International Journal of Theoretical and Applied Finance, 15 (2012),
  pp.~1--34.

\bibitem{ArchilAtom}
{\sc A.~Gulisashvili}, {\em Left-wing asymptotics of the implied volatility in
  the presence of atoms}, International Journal of Theoretical and Applied
  Finance, 18(2) (2015).

\bibitem{GulStHW}
{\sc A.~Gulisashvili and E.~M. Stein}, {\em Implied volatility in the
  {H}ull-{W}hite model}, Mathematical Finance, 19 (2009), pp.~303--327.

\bibitem{GulSt}
{\sc A.~Gulisashvili and E.~M. Stein}, {\em Asymptotic behavior of the stock
  price distribution density and implied volatility in stochastic volatility
  models}, Applied Mathematics and Optimization, 61 (2010), pp.~287--315.

\bibitem{GJMN2012}
{\sc G.~Guo, A.~Jacquier, C.~Martini, and L.~Neufcourt}, {\em Generalised
  arbitrage-free {SVI} volatility surfaces}, SIAM J. Finan. Math, 7 (2016),
  pp.~619--641.

\bibitem{Hest}
{\sc S.~Heston}, {\em A closed-form solution for options with stochastic
  volatility with applications to bond and currency options}, The Review of
  Financial Studies, 6 (1993), pp.~327--343.

\bibitem{Karlin}
{\sc S.~Karlin and H.~Taylor}, {\em A second course in stochastic processes},
  Academic Press, 1981.

\bibitem{Lee}
{\sc R.~W. Lee}, {\em The moment formula for implied volatility at extreme
  strikes}, Mathematical Finance, 14 (2004), pp.~469--480.

\bibitem{Merton76optionpricing}
{\sc R.~C. Merton}, {\em Option pricing when underlying stock returns are
  discontinuous}, Journal of Financial Economics, 3 (1976), pp.~125--144.

\bibitem{OOUY}
{\sc S.~Ohsaki, t.~Ozeki, Y.~Umezawa, and A.~Yamazaki}, {\em A note on the
  {B}lack--{S}choles implied volatility with default risk}, Wilmott Magazine, 2
  (June 2010), pp.~155--170.

\bibitem{RenTouz}
{\sc E.~Renault and N.~Touzi}, {\em Option hedging and implied volatilities in
  a stochastic volatility model}, Mathematical Finance, 6 (1996), pp.~279--302.

\bibitem{RogTeh}
{\sc L.~C.~G. Rogers and M.~R. Tehranchi}, {\em Can the implied volatility
  surface move by parallel shifts?}, Finance and Stochastics, 14 (2010),
  pp.~235--248.

\bibitem{SteinStein}
{\sc E.~M. Stein and J.~C. Stein}, {\em Stock price distribution with
  stochastic volatility: an analytic approach}, Review of Financial Studies, 4
  (1991), pp.~727--752.

\bibitem{TehPresentation}
{\sc M.~Tehranchi}, {\em No-arbitrage bounds on implied volatility}.
\newblock Conference presentation, PDE\& Finance Stockholm 2007, available at
  \url{http://www.math.kth.se/pde_finance07/}.

\end{thebibliography}

\end{document}